\newcommand*{\kw}[1]{\mathop{\textup{\textbf{#1}}}}
\newcommand*{\arrayrange}[2]{\mathop{[{#1}\,.\,.\,{#2}]}}
\newcommand{\summary}[1]{\textrm{\textbf{\textup{#1}}}}
\newcommand{\bigO}{\mathop{\mathrm{O}}\nolimits}
\newcommand{\bigTheta}{\mathop{\mathrm{\Theta}}\nolimits}
\providecommand*{\Zset}{\mathbb{Z}}            % Integers
\providecommand*{\Qset}{\mathbb{Q}}            % Rationals
\newcommand*{\cC}{\ensuremath{\mathcal{C}}}
\newcommand*{\cN}{\ensuremath{\mathcal{N}}}
\newcommand*{\cV}{\ensuremath{\mathcal{V}}}
\newcommand*{\rS}{\ensuremath{\mathrm{S}}}
\newcommand*{\rs}{\ensuremath{\mathrm{s}}}
\newcommand*{\rT}{\ensuremath{\mathrm{T}}}
\newcommand*{\rt}{\ensuremath{\mathrm{t}}}
\newcommand*{\lowrS}{\text{\raisebox{-.2ex}{\rS}}}
\newcommand*{\lowrT}{\text{\raisebox{-.2ex}{\rT}}}
\renewcommand{\emptyset}{\mathord{\varnothing}}
\newcommand*{\union}{\cup}
\newcommand*{\fund}[3]{\mathord{#1}\colon#2\rightarrow#3}
\newcommand{\itc}{\mathrel{:}}
\spnewtheorem*{delayedproof}{Proof}{\bfseries}{\rmfamily}
\newcommand{\defeq}{\mathrel{\mathord{:}\mathord{=}}}
\newcommand*{\Vpm}{\cN}
\newcommand*{\length}[1]{\left\| #1 \right\|}
\newcommand{\pathconc}{\mathrel{::}}
\renewcommand*{\bar}[1]{\overline{#1}}
\newcommand*{\bari}{\overline{\imath}}
\newcommand*{\barj}{\overline{\jmath}}
\newcommand*{\barz}{\overline{z}}
\newcommand*{\Graphs}{\mathbb{G}}
\newcommand*{\graphleq}{\unlhd}
\newcommand*{\graphlt}{\lhd}
\newcommand*{\graphglb}{\sqcap}
\newcommand*{\graphlub}{\sqcup}
\newcommand*{\biggraphlub}{\bigsqcup}
\newcommand*{\Octgraphs}{\mathbb{O}}
\newcommand*{\closure}{\mathop{\mathrm{closure}}\nolimits}
\newcommand*{\strongclosure}{\mathop{\text{\rm S-closure}}\nolimits}
\newcommand*{\tightclosure}{\mathop{\text{\rm T-closure}}\nolimits}
\newdimen\proofrulebreadth \proofrulebreadth=.05em
\newdimen\proofdotseparation \proofdotseparation=1.25ex
\newdimen\proofrulebaseline \proofrulebaseline=2ex
\let\then\relax
\def\hfi{\hskip0pt plus.0001fil}
\mathchardef\squigto="3A3B
\newif\ifinsideprooftree\insideprooftreefalse
\newif\ifonleftofproofrule\onleftofproofrulefalse
\newif\ifproofdots\proofdotsfalse
\newif\ifdoubleproof\doubleprooffalse
\let\wereinproofbit\relax
\newdimen\shortenproofleft
\newdimen\shortenproofright
\newdimen\proofbelowshift
\newbox\proofabove
\newbox\proofbelow
\newbox\proofrulename
\def\shiftproofbelow{\let\next\relax\afterassignment\setshiftproofbelow\dimen0 }
\def\shiftproofbelowneg{\def\next{\multiply\dimen0 by-1 }%
\afterassignment\setshiftproofbelow\dimen0 }
\def\setshiftproofbelow{\next\proofbelowshift=\dimen0 }
\def\setproofrulebreadth{\proofrulebreadth}
\def\prooftree{% NESTED ZERO (\ifonleftofproofrule)
%
% first find out whether we're at the left-hand end of a proof rule
\ifnum  \lastpenalty=1
\then   \unpenalty
\else   \onleftofproofrulefalse
\fi
%
% some space on left (except if we're on left, and no infinity for outermost)
\ifonleftofproofrule
\else   \ifinsideprooftree
        \then   \hskip.5em plus1fil
        \fi
\fi
%
% begin our proof tree environment
\bgroup% NESTED ONE (\proofbelow, \proofrulename, \proofabove,
%               \shortenproofleft, \shortenproofright, \proofrulebreadth)
\setbox\proofbelow=\hbox{}\setbox\proofrulename=\hbox{}%
\let\justifies\proofover\let\leadsto\proofoverdots\let\Justifies\proofoverdbl
\let\using\proofusing\let\[\prooftree
\ifinsideprooftree\let\]\endprooftree\fi
\proofdotsfalse\doubleprooffalse
\let\thickness\setproofrulebreadth
\let\shiftright\shiftproofbelow \let\shift\shiftproofbelow
\let\shiftleft\shiftproofbelowneg
\let\ifwasinsideprooftree\ifinsideprooftree
\insideprooftreetrue
%
% now begin to set the top of the rule (definitions local to it)
\setbox\proofabove=\hbox\bgroup$\displaystyle % NESTED TWO
\let\wereinproofbit\prooftree
%
% these local variables will be copied out:
\shortenproofleft=0pt \shortenproofright=0pt \proofbelowshift=0pt
%
% flags to enable inner proof tree to detect if on left:
\onleftofproofruletrue\penalty1
}
\def\eproofbit{% NESTED TWO
%
% various hacks applicable to hypothesis list 
\ifx    \wereinproofbit\prooftree
\then   \ifcase \lastpenalty
        \then   \shortenproofright=0pt  % 0: some other object, no indentation
        \or     \unpenalty\hfil         % 1: empty hypotheses, just glue
        \or     \unpenalty\unskip       % 2: just had a tree, remove glue
        \else   \shortenproofright=0pt  % eh?
        \fi
\fi
%
% pass out crucial values from scope
\global\dimen0=\shortenproofleft
\global\dimen1=\shortenproofright
\global\dimen2=\proofrulebreadth
\global\dimen3=\proofbelowshift
\global\dimen4=\proofdotseparation
\global\count255=\proofdotnumber
%
% end the box
$\egroup  % NESTED ONE
%
% restore the values
\shortenproofleft=\dimen0
\shortenproofright=\dimen1
\proofrulebreadth=\dimen2
\proofbelowshift=\dimen3
\proofdotseparation=\dimen4
\proofdotnumber=\count255
}
\def\proofover{% NESTED TWO
\eproofbit % NESTED ONE
\setbox\proofbelow=\hbox\bgroup % NESTED TWO
\let\wereinproofbit\proofover
$\displaystyle
}%
\def\proofoverdbl{% NESTED TWO
\eproofbit % NESTED ONE
\doubleprooftrue
\setbox\proofbelow=\hbox\bgroup % NESTED TWO
\let\wereinproofbit\proofoverdbl
$\displaystyle
}%
\def\proofoverdots{% NESTED TWO
\eproofbit % NESTED ONE
\proofdotstrue
\setbox\proofbelow=\hbox\bgroup % NESTED TWO
\let\wereinproofbit\proofoverdots
$\displaystyle
}%
\def\proofusing{% NESTED TWO
\eproofbit % NESTED ONE
\setbox\proofrulename=\hbox\bgroup % NESTED TWO
\let\wereinproofbit\proofusing
\kern0.3em$
}
\def\endprooftree{% NESTED TWO
\eproofbit % NESTED ONE
% \dimen0 =     length of proof rule
% \dimen1 =     indentation of conclusion wrt rule
% \dimen2 =     new \shortenproofleft, ie indentation of conclusion
% \dimen3 =     new \shortenproofright, ie
%                space on right of conclusion to end of tree
% \dimen4 =     space on right of conclusion below rule
  \dimen5 =0pt% spread of hypotheses
% \dimen6, \dimen7 = height & depth of rule
%
% length of rule needed by proof above
\dimen0=\wd\proofabove \advance\dimen0-\shortenproofleft
\advance\dimen0-\shortenproofright
%
% amount of spare space below
\dimen1=.5\dimen0 \advance\dimen1-.5\wd\proofbelow
\dimen4=\dimen1
\advance\dimen1\proofbelowshift \advance\dimen4-\proofbelowshift
%
% conclusion sticks out to left of immediate hypotheses
\ifdim  \dimen1<0pt
\then   \advance\shortenproofleft\dimen1
        \advance\dimen0-\dimen1
        \dimen1=0pt
%       now it sticks out to left of tree!
        \ifdim  \shortenproofleft<0pt
        \then   \setbox\proofabove=\hbox{%
                        \kern-\shortenproofleft\unhbox\proofabove}%
                \shortenproofleft=0pt
        \fi
\fi
%
% and to the right
\ifdim  \dimen4<0pt
\then   \advance\shortenproofright\dimen4
        \advance\dimen0-\dimen4
        \dimen4=0pt
\fi
%
% make sure enough space for label
\ifdim  \shortenproofright<\wd\proofrulename
\then   \shortenproofright=\wd\proofrulename
\fi
%
% calculate new indentations
\dimen2=\shortenproofleft \advance\dimen2 by\dimen1
\dimen3=\shortenproofright\advance\dimen3 by\dimen4
%
% make the rule or dots, with name attached
\ifproofdots
\then
        \dimen6=\shortenproofleft \advance\dimen6 .5\dimen0
        \setbox1=\vbox to\proofdotseparation{\vss\hbox{$\cdot$}\vss}%
        \setbox0=\hbox{%
                \advance\dimen6-.5\wd1
                \kern\dimen6
                $\vcenter to\proofdotnumber\proofdotseparation
                        {\leaders\box1\vfill}$%
                \unhbox\proofrulename}%
\else   \dimen6=\fontdimen22\the\textfont2 % height of maths axis
        \dimen7=\dimen6
        \advance\dimen6by.5\proofrulebreadth
        \advance\dimen7by-.5\proofrulebreadth
        \setbox0=\hbox{%
                \kern\shortenproofleft
                \ifdoubleproof
                \then   \hbox to\dimen0{%
                        $\mathsurround0pt\mathord=\mkern-6mu%
                        \cleaders\hbox{$\mkern-2mu=\mkern-2mu$}\hfill
                        \mkern-6mu\mathord=$}%
                \else   \vrule height\dimen6 depth-\dimen7 width\dimen0
                \fi
                \unhbox\proofrulename}%
        \ht0=\dimen6 \dp0=-\dimen7
\fi
%
% set up to centre outermost tree only
\let\doll\relax
\ifwasinsideprooftree
\then   \let\VBOX\vbox
\else   \ifmmode\else$\let\doll=$\fi
        \let\VBOX\vcenter
\fi
% this \vbox or \vcenter is the actual output:
\VBOX   {\baselineskip\proofrulebaseline \lineskip.2ex
        \expandafter\lineskiplimit\ifproofdots0ex\else-0.6ex\fi
        \hbox   spread\dimen5   {\hfi\unhbox\proofabove\hfi}%
        \hbox{\box0}%
        \hbox   {\kern\dimen2 \box\proofbelow}}\doll%
%
% pass new indentations out of scope
\global\dimen2=\dimen2
\global\dimen3=\dimen3
\egroup % NESTED ZERO
\ifonleftofproofrule
\then   \shortenproofleft=\dimen2
\fi
\shortenproofright=\dimen3
%
% some space on right and flag we've just made a tree
\onleftofproofrulefalse
\ifinsideprooftree
\then   \hskip.5em plus 1fil \penalty2
\fi
}
\begin{document}

\title{An Improved Tight Closure Algorithm\\
       for Integer Octagonal Constraints\thanks{This
work has been partly supported by MURST project
``AIDA --- Abstract Interpretation: Design and Applications,''
and by a Royal Society (UK) International Joint Project (ESEP) award.}}
\author{Roberto Bagnara\inst{1}
\and Patricia M. Hill\inst{2}
\and Enea Zaffanella\inst{1}
}
\authorrunning{R.~Bagnara,
               P.~M.~Hill,
               E.~Zaffanella
}
\tocauthor{Roberto Bagnara (University of Parma),
           Patricia M. Hill (University of Leeds),
           Enea Zaffanella (University of Parma)
}

\institute{
    Department of Mathematics,
    University of Parma,
    Italy \\
    \email{\{bagnara,%
             zaffanella\}@cs.unipr.it}
\and
    School of Computing,
    University of Leeds,
    UK \\
    \email{hill@comp.leeds.ac.uk}
}

\maketitle

\begin{abstract}
Integer octagonal constraints
(a.k.a.\ \emph{Unit Two Variables Per Inequality} or
\emph{UTVPI integer constraints})
constitute an interesting class of constraints
for the representation and solution of integer problems
in the fields of constraint programming and formal analysis
and verification of software and hardware systems,
since they couple algorithms having polynomial complexity
with a relatively good expressive power.
The main algorithms required for the manipulation of such constraints
are the satisfiability check and the computation of the inferential
closure of a set of constraints.  The latter is called \emph{tight}
closure to mark the difference with the (incomplete) closure algorithm
that does not exploit the integrality of the variables.
In this paper we present and
fully justify an $\bigO(n^3)$ algorithm to compute the tight closure
of a set of UTVPI integer constraints.
\end{abstract}

\section{Introduction}
\label{sec:introduction}

\emph{Integer octagonal constraints}, also called
\emph{Unit Two Variables Per Inequality (UTVPI) integer constraints}
---that is, constraints of the form $ax + by \leq d$
where $a, b \in \{-1, 0, +1\}$, $d \in \Zset$ and the variables $x$ and $y$
range over the integers---,
constitute an interesting subclass of linear integer constraints
admitting polynomial solvability.
The place these constraints occupy in the complexity/expressivity spectrum
is in fact peculiar.
Concerning complexity, relaxing the restriction imposing (at most)
two variables per constraint, or relaxing the restriction on coefficients,
or relaxing both restrictions make the satisfiability problem NP-complete
\cite{JaffarMSY94,Lagarias85b}.
Concerning expressivity, integer octagonal constraints can be used
for representing
and solving many integer problems in the field of constraint programming,
such as temporal reasoning and scheduling \cite{JaffarMSY94}.
In the field of formal analysis and verification of software and hardware
systems, these constraints have been successfully used in a number
of applications \cite{BalasundaramK89,BallCLZ04,CousotCFMMMR05,Mine05th}.

When (integer or rational) octagonal constraints are used to build
abstract domains\footnote{In \emph{abstract interpretation}
theory \cite{CousotC77}, an \emph{abstract domain} is an algebraic structure
formalizing  a set of approximate assertions endowed with an entailment
(or approximation) relation, plus various operations that correctly approximate
the operations of some \emph{concrete domain}, i.e., the domain being
abstracted/approximated.}
---such as the \emph{Octagon Abstract Domain} implemented in the library
with the same name \cite{Mine06b} or the domain of \emph{octagonal shapes}
defined in \cite{BagnaraHMZ05} and implemented in the
\emph{Parma Polyhedra Library} \cite{BagnaraHZ06TR}---
the most critical operation is not the satisfiability check
(although very important in constraint programming)
but \emph{closure by entailment}.
This is the procedure whereby a set of octagonal constraints
is augmented with (a finite representation of) all the
octagonal constraints that can be inferred from it.
The closure algorithms for rational octagonal constraints are sound
but not complete for integer octagonal constraints.
The latter require so-called \emph{tight} closure algorithms that
fully exploit the integrality of the variables.

In 2005, Lahiri and Musuvathi proposed an $\bigO(n^3)$ algorithm for
the satisfiability check of a (non trivially redundant) system of
UTVPI integer constraints~\cite{LahiriM05}.  They also sketched
(without formal definitions and proofs) a tight closure algorithm with
the same worst-case complexity bound.
Still in 2005, Min\'e proposed a modification of
the strong (i.e., non-tight) closure algorithm for
\emph{rational} octagonal constraints
and argued that this would provide a good
and efficient approximation of tight closure~\cite{Mine01a}.
In the same year
we showed that the algorithm for computing
the strong closure of rational octagonal constraints
as described in~\cite{Mine01a}
could be simplified with a consequential improvement in its efficiency
\cite{BagnaraHMZ05,BagnaraHMZ05TR}.
In this paper we show that our result can be extended
so as to apply to integer octagonal constraints.
This enables us to present and, for the first time, fully justify an
$\bigO(n^3)$ algorithm to compute the tight closure of a set of UTVPI
integer constraints.

In Section~\ref{sec:preliminaries} we briefly introduce the terminology
and notation adopted throughout the paper and we recall a few
standard results on weighted graphs.
In Section~\ref{sec:octagonal-graph}, we give the definition
of rational-weighted octagonal graphs and recall some of the results that
were established in~\cite{BagnaraHMZ05,BagnaraHMZ05TR}.
In Section~\ref{sec:octagonal-Z-graph}, we extend these results
to the case of integer-weighted octagonal graphs.
Finally, in Section~\ref{sec:conclusion} we conclude and briefly
discuss future work.

\section{Preliminaries}
\label{sec:preliminaries}

Let $\Qset_\infty \defeq \Qset \union \{ +\infty \}$ be totally
ordered by the extension of `$\mathord{<}$' such that $d < +\infty$ for
each $d \in \Qset$.
Let $\cN$ be a finite set of \emph{nodes}.
A \emph{rational-weighted directed graph} (graph, for short)
$G$ in $\cN$ is a pair $(\cN, w)$,
where $\fund{w}{\cN \times \cN}{\Qset_\infty}$ is the
weight function for $G$.

Let $G = (\cN, w)$ be a graph.
A pair $(n_i, n_j) \in \cN \times \cN$ is an \emph{arc} of $G$
if $w(n_i,n_j) < +\infty$;
the arc is \emph{proper} if $n_i \neq n_j$.
A \emph{path} $\pi = n_0 \cdots n_p$ in $G$
is a non-empty and finite sequence of nodes such that
$(n_{i-1}, n_i)$ is an arc of $G$, for all $i = 1$, \dots,~$p$.
Each node $n_i$ where  $i = 0$, \dots,~$p$ and
each arc $(n_{i-1}, n_i)$ where $i = 1$, \dots,~$p$
is said to be \emph{in} the path $\pi$.
The \emph{length} of the path $\pi$ is the number $p$ of occurrences
of arcs in $\pi$ and denoted by $\length{\pi}$;
the \emph{weight} of the path $\pi$
is $\sum_{i=1}^p w(n_{i-1}, n_i)$ and denoted by $w(\pi)$.
The path $\pi$ is \emph{simple} if each node occurs at most once in $\pi$.
The path $\pi$ is \emph{proper} if all the arcs in it are proper.
The path $\pi$ is a \emph{proper cycle} if it is a proper path,
$n_0 = n_p$ and $p \geq 2$.
If $\pi_1 = n_0 \cdots n_h$
and $\pi_2 = n_h \cdots n_p$ are paths,
where $0 \leq h \leq p$,
then the path concatenation $\pi = n_0 \cdots n_h \cdots n_p$ of
$\pi_1$ and $\pi_2$ is denoted by $\pi_1 \pathconc \pi_2$;
if $\pi_1 = n_0 n_1$ (so that $h = 1$),
then $\pi_1 \pathconc \pi_2$ will also be denoted by $n_0 \cdot \pi_2$.
Note that path concatenation is not the same as sequence concatenation.
The path $\pi$ is a \emph{zero-cycle}
if it is a proper cycle and $w(\pi) = 0$.
A graph is \emph{zero-cycle free} if all its proper cycles
have strictly positive weights.

A graph $(\cN, w)$ can be interpreted to represent
the system of \emph{potential constraints}
\[
  \cC
    \defeq
      \bigl\{\,
        n_i - n_j \leq w(n_i, n_j)
      \bigm|
        n_i, n_j \in \cN
      \,\bigr\}.
\]
Hence, the graph $(\cN, w)$ is \emph{consistent}
if and only if
the system of constraints it represents is satisfiable in $\Qset$,
i.e., there exists a rational valuation
$\fund{\rho}{\cN}{\Qset}$ such that,
for each constraint $(n_i - n_j \leq d) \in \cC$,
the relation $\rho(n_i) - \rho(n_j) \leq d$ holds.
It is well-known that a graph is consistent
if and only if it has no negative weight cycles
(see \cite[Section 25.5]{CormenLR90} and \cite{Pratt77}).

The set of consistent graphs in $\cN$ is denoted by $\Graphs$.
This set is partially ordered by the relation
`$\mathord{\graphleq}$' defined,
for all $G_1 = (\cN, w_1)$ and $G_2 = (\cN, w_2)$, by
\[
  G_1 \graphleq G_2
    \quad\iff\quad
      \forall i, j \in \cN \itc w_1(i,j) \leq w_2(i,j).
\]
We write $G \graphlt G'$ when $G \graphleq G'$ and $G \neq G'$.
When augmented with a bottom element $\bot$ representing inconsistency,
this partially ordered set becomes a non-complete lattice
\(
  \Graphs_\bot
    = \bigl\langle
        \Graphs \union \{ \bot \},
        \graphleq,
        \graphglb,
        \graphlub
      \bigr\rangle
\),
where `$\mathord{\graphglb}$' and `$\mathord{\graphlub}$'
denote the finitary greatest lower bound and least upper bound
operators, respectively.

\begin{definition}
\label{def:closure}
\summary{(Closed graph.)}
A consistent graph $G = (\cN, w)$ is \emph{closed}
if the following properties hold:
\begin{align}
\label{rule:zero-weight-self-loops}
  \forall i \in \cN
    &\itc w(i,i) = 0; \\
\label{rule:transitivity}
  \forall i,j,k \in \cN
    &\itc w(i,j) \leq w(i,k) + w(k,j).
\end{align}
The \emph{(shortest-path) closure} of a consistent graph $G$ in $\cN$ is
\[
  \closure(G)
    \defeq
      \biggraphlub
        \bigl\{\,
          G' \in \Graphs
        \bigm|
          \text{$G' \graphleq G$ and $G'$ is closed\/}
        \,\bigr\}.
\]
\end{definition}
When trivially extended so as to behave as the identity function
on the bottom element $\bot$, shortest-path closure is a kernel operator
(monotonic, idempotent and reductive) on the lattice $\Graphs_\bot$,
therefore providing a canonical form.

The following lemma recalls a well-known result for closed graphs
(for a proof, see Lemma~5 in~\cite{BagnaraHMZ05TR}).

\begin{lemma}
\label{lem:trans-rule-star}
Let $G =(\cN, w) \in \Graphs$ be a closed graph.
Then, for any path $\pi = i \cdots j$ in $G$,
it holds that $w(i, j) \leq w(\pi)$.
\end{lemma}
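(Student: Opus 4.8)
The plan is to proceed by induction on the length $p$ of the path $\pi = n_0 \cdots n_p$, where $n_0 = i$ and $n_p = j$. The two defining properties of a closed graph supply exactly what is needed: the zero-weight self-loop rule~\eqref{rule:zero-weight-self-loops} settles the degenerate case, while the transitivity rule~\eqref{rule:transitivity} drives the inductive step.

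For the base case I would take $p = 1$, where $\pi$ is the single arc from $i$ to $j$ and $w(\pi) = w(i,j)$, so the claim holds trivially; alternatively, admitting $p = 0$ forces $i = j$ and $w(\pi) = 0$, and then~\eqref{rule:zero-weight-self-loops} gives $w(i,i) = 0$, so equality holds. For the inductive step, assuming the statement for all shorter paths, I would split $\pi = n_0 \cdot \pi'$ with $\pi' = n_1 \cdots n_p$ a path of length $p-1$ from $n_1$ to $j$. By the definition of path weight, $w(\pi) = w(n_0, n_1) + w(\pi')$, and the induction hypothesis applied to $\pi'$ gives $w(n_1, j) \leq w(\pi')$. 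Combining these and then instantiating~\eqref{rule:transitivity} with $k \defeq n_1$ yields
\[
  w(i,j)
    \leq w(i, n_1) + w(n_1, j)
    \leq w(n_0, n_1) + w(\pi')
    = w(\pi),
\]
which closes the induction.

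I do not expect any serious obstacle: the argument is a textbook induction establishing that, under the triangle inequality, the direct arc is no heavier than any path sharing its endpoints. The only points meriting a line of care are the choice of base case and checking that the chain of inequalities remains sound in the extended arithmetic of $\Qset_\infty$. Since $(n_0, n_1)$ is an arc, the term $w(i, n_1) = w(n_0, n_1)$ is finite; and should any weight on the right-hand side be $+\infty$, the relevant inequality holds vacuously because $+\infty$ is the greatest element of $\Qset_\infty$.
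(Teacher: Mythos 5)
Your proof is correct: the paper itself does not reproduce a proof of this lemma (it defers to Lemma~5 of the cited technical report), but the argument there is precisely this standard induction on path length, using property~\eqref{rule:zero-weight-self-loops} for the degenerate case and property~\eqref{rule:transitivity} for the inductive step. Your attention to the $p=0$ base case and to the arithmetic in $\Qset_\infty$ (where in fact every arc in a path has finite weight by definition, so all terms on the right-hand side are finite) is sound.
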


\section{Rational Octagonal Graphs}
\label{sec:octagonal-graph}

We assume in the following that there is a fixed set
$\cV = \{ v_0, \ldots, v_{n-1} \}$ of $n$ variables.
The octagon abstract domain allows for the manipulation of
\emph{octagonal constraints}
of the form  $a v_i + b v_j \leq d$, where $a,b \in \{-1,0,+1\}$,
$a \neq 0$, $v_i,v_j \in \cV$, $v_i \neq v_j$ and $d \in \Qset$.
Octagonal constraints can be encoded using potential constraints
by splitting each variable $v_i$ into two forms:
a positive form $v_i^+$, interpreted as $+v_i$;
and a negative form $v_i^-$, interpreted as $-v_i$.
Then any octagonal constraint $a v_i + b v_j \leq d$ can be written
as a potential constraint $v - v' \leq d_0$
where $v, v' \in \{v_i^+, v_i^-, v_j^+, v_j^-\}$ and $d_0 \in \Qset$.
Namely, an octagonal constraint such as $v_i + v_j \leq d$ can be
translated into the potential constraint
$v_i^+ - v_j^- \leq d$; alternatively, the same octagonal constraint
can be translated into $v_j^+ - v_i^- \leq d$.
Furthermore, unary (octagonal) constraints such as
$v_i \leq d$ and $-v_i \leq d$
can be encoded as $v_i^+ - v_i^- \leq 2d$
and $v_i^- - v_i^+ \leq 2d$, respectively.

From now on, we assume that the set of nodes
is $\Vpm \defeq \{ 0, \ldots, 2n-1 \}$.
These will denote the positive and negative forms of the variables in $\cV$:
for all $i \in \Vpm$, if $i = 2k$, then
$i$ represents the positive form $v_k^+$ and, if $i = 2k+1$, then
$i$ represents the negative form $v_k^-$ of the variable $v_k$.
To simplify the presentation, for each $i \in \Vpm$,
we let $\bari$ denote $i+1$, if $i$ is even,
and $i-1$, if $i$ is odd, so that, for all $i \in \Vpm$, we also have
$\bari \in \Vpm$ and $\bar{\bari} = i$.
Then we can rewrite a potential constraint $v - v' \leq d$
where $v \in \{v_k^+, v_k^-\}$
and $v' \in \{v_l^+, v_l^-\}$
as the potential constraint $i - j \leq d$ in $\Vpm$ where, if
$v = v^+_k$, $i = 2k$ and, if $v = v^-_k$, $i = 2k+1$;
similarly, if
$v' = v^+_l$, $j = 2l$ and, if $v' = v^-_l$, $j = 2l+1$.

It follows from the above translations that
any finite system of octagonal constraints,
translated to a set of potential constraints in $\Vpm$ as above,
can be encoded by a graph $G$ in $\Vpm$.
In particular, any finite \emph{satisfiable} system
of octagonal constraints can be encoded by a \emph{consistent}
graph in $\Vpm$.
However, the converse does not hold
since in any valuation $\rho$ of an encoding of
a set of octagonal constraints
we must also have $\rho(i) = -\rho(\bari)$, so that
the arcs $(i,j)$ and $(\barj, \bari)$ should have the same weight.
Therefore, to encode rational octagonal constraints,
we restrict attention to consistent graphs over $\Vpm$
where the arcs in all such pairs are \emph{coherent}.

\begin{definition}
\label{def:octagonal-graph}
\summary{(Octagonal graph.)}
A (rational) \emph{octagonal graph} is
any consistent graph $G =(\Vpm, w)$ that
satisfies the coherence assumption:
\begin{equation}
\label{rule:coherence}
  \forall i,j \in \Vpm
    \itc w(i,j) = w(\barj, \bari).
\end{equation}
\end{definition}
The set $\Octgraphs$ of all octagonal graphs
(with the usual addition of the bottom element,
representing an unsatisfiable system of constraints)
is a sub-lattice of $\Graphs_\bot$, sharing the same
least upper bound and greatest lower bound operators.
Note that, at the implementation level, coherence can be
automatically and efficiently enforced by letting arc $(i,j)$
and arc $(\barj, \bari)$ share the same representation.

When dealing with octagonal graphs, one has to remember
the relation linking the positive and negative forms of variables.
A proper closure by entailment procedure
should consider, besides transitivity, the following inference rule:
\begin{equation}
\label{eq:strong-coherence-inference-rule}
  \prooftree
    i - \bari \leq d_1
      \qquad
    \barj - j \leq d_2
  \justifies
    i - j \leq \frac{d_1 + d_2}{2}
  \endprooftree
\end{equation}
Thus, the standard shortest-path closure algorithm is not enough
to obtain a canonical form for octagonal graphs.

\begin{definition}
\label{def:strong-closure}
\summary{(Strongly closed graph.)}
An octagonal graph $G = (\Vpm, w)$ is \emph{strongly closed}
if it is closed and the following property holds:
\begin{equation}
\label{rule:strong-coherence}
  \forall i,j \in \Vpm
    \itc 2 w(i,j) \leq w(i, \bari) + w(\barj, j).
\end{equation}
The \emph{strong closure} of an octagonal graph $G$ in $\Vpm$ is
\[
  \strongclosure(G)
    \defeq
      \biggraphlub
        \bigl\{\,
          G' \in \Octgraphs
        \bigm|
          \text{$G' \graphleq G$ and $G'$ is strongly closed\/}
        \,\bigr\}.
\]
\end{definition}
When trivially extended to the bottom element,
strong closure is a kernel operator on the lattice of octagonal graphs.

A modified closure procedure is defined in~\cite{Mine01b},
yielding strongly closed octagonal graphs.
A significant efficiency improvement can be obtained thanks to
the following theorem
(for a proof, see Theorem~2 in~\cite{BagnaraHMZ05TR}).

\begin{theorem}
\label{thm:one-step-strong-closure}
Let $G = (\Vpm, w)$ be a closed octagonal graph.
Consider the graph $G_\rS = (\Vpm, w_\lowrS)$,
where $w_\lowrS$ is defined, for each $i,j \in \Vpm$, by
\[
  w_\lowrS(i, j)
    \defeq
      \min
        \biggl\{
          w(i, j),
          \frac{w(i, \bari)}{2} + \frac{w(\barj, j)}{2}
        \biggr\}.
\]
Then $G_\rS = \strongclosure(G)$.
\end{theorem}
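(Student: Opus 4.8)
The plan is to show that the graph $G_\rS$ is in fact the greatest element of the set
\[
  S \defeq \bigl\{\, G' \in \Octgraphs \bigm| G' \graphleq G \text{ and } G' \text{ is strongly closed} \,\bigr\},
\]
from which $G_\rS = \biggraphlub S = \strongclosure(G)$ follows at once. Concretely, I would establish two facts: (A) that $G_\rS$ itself belongs to $S$, and (B) that $G_\rS$ is an upper bound of $S$. Part (A) places $G_\rS$ below the least upper bound, while (B) places the least upper bound below $G_\rS$; together they force equality.

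For part (A), reductivity $G_\rS \graphleq G$ is immediate since $w_\lowrS(i,j) \leq w(i,j)$ by construction, and coherence of $G_\rS$ follows by a short calculation that rewrites $w_\lowrS(\barj,\bari)$ using the coherence of $G$ and the involution $\bar{\bari} = i$. Strong coherence is also easy: evaluating the defining formula at a coherent pair shows $w_\lowrS(i,\bari) = w(i,\bari)$ and $w_\lowrS(\barj,j) = w(\barj,j)$, so the strong coherence inequality for $G_\rS$ reduces to $2 w_\lowrS(i,j) \leq w(i,\bari) + w(\barj,j)$, which holds because the second argument of the min defining $w_\lowrS(i,j)$ is exactly $\tfrac{1}{2}\bigl(w(i,\bari) + w(\barj,j)\bigr)$. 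What remains for (A) — and this is where the real work lies — is to prove that $G_\rS$ is closed. The zero-weight self-loop property is quick ($w(i,i) = 0$ since $G$ is closed, and the alternative term $\tfrac{1}{2}\bigl(w(i,\bari) + w(\bari,i)\bigr)$ is nonnegative, being half the weight of the cycle $i \to \bari \to i$). Consistency of $G_\rS$ then comes for free, since a closed graph with zero self-loops has no negative cycles by Lemma~\ref{lem:trans-rule-star}.

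The crux is the triangle inequality $w_\lowrS(i,j) \leq w_\lowrS(i,k) + w_\lowrS(k,j)$. Since a sum of two minima is the minimum over the four combinations of their arguments, I would expand the right-hand side into four candidate expressions and show each dominates $w_\lowrS(i,j)$. The pure-transitivity combination $w(i,k) + w(k,j) \geq w(i,j)$ follows from closedness of $G$, and the all-coherence combination reduces to $\tfrac{1}{2}\bigl(w(i,\bari) + w(\barj,j)\bigr)$ after discarding the nonnegative cycle weight $w(\bark,k) + w(k,\bark)$. The two mixed combinations are the delicate ones: for instance, dominating by the term $\tfrac{1}{2}\bigl(w(i,\bari) + w(\barj,j)\bigr)$ requires the inequality $2w(i,k) + w(k,\bark) \geq w(i,\bari)$, which I would obtain by using coherence $w(i,k) = w(\bark,\bari)$ to recognise the left-hand side as the weight of the path $i \to k \to \bark \to \bari$ and then applying Lemma~\ref{lem:trans-rule-star}; the symmetric mixed case is handled analogously via the path $\barj \to \bark \to k \to j$. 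I expect these two mixed cross-terms to be the main obstacle, as they are the only points where coherence and the path-weight lemma must be combined rather than invoked separately.

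For part (B), let $G' = (\Vpm, w')$ be any strongly closed octagonal graph with $G' \graphleq G$; I must show $w'(i,j) \leq w_\lowrS(i,j)$, that is, that $w'(i,j)$ is bounded by both arguments of the defining min. The bound $w'(i,j) \leq w(i,j)$ is just $G' \graphleq G$. For the other bound, strong coherence of $G'$ gives $2 w'(i,j) \leq w'(i,\bari) + w'(\barj,j)$, and monotonicity $w'(i,\bari) \leq w(i,\bari)$, $w'(\barj,j) \leq w(\barj,j)$ then yields $w'(i,j) \leq \tfrac{1}{2}\bigl(w(i,\bari) + w(\barj,j)\bigr)$. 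Hence $w'(i,j) \leq w_\lowrS(i,j)$, so $G' \graphleq G_\rS$. Combining (A) and (B) shows that $G_\rS$ is the maximum of $S$ and therefore equals $\strongclosure(G)$.
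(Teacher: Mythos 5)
Your proof is correct, but it takes a genuinely different route from the paper's. The paper defers the details of this statement to Theorem~2 of the cited technical report, but reproduces the same strategy in full for the integer analogue (Theorem~\ref{thm:one-step-tight-closure}): writing $G^\rS = (\Vpm, w^\rS) = \strongclosure(G)$, the inequality $w^\rS(i,j) \leq w_\lowrS(i,j)$ falls out of the definitions exactly as in your part~(B), while the reverse inequality is obtained from a structural dichotomy --- every weight of $G^\rS$ equals either $w(i,j)$ or $\bigl(w(i,\bari)+w(\barj,j)\bigr)/2$ --- proved by a contraposition/minimality argument resting on Lemma~\ref{lem:ijarc-octagonal->closed} (closure weights are realised by simple paths) and Lemma~\ref{lem:closed->1-step-strong-coherence} (one application of the coherence inference rule followed by shortest-path closure preserves the dichotomy). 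You instead verify directly that $G_\rS$ is the \emph{maximum} of the defining set $S$, so that the least upper bound is attained; the only real work is the triangle inequality for $w_\lowrS$, which you settle by expanding the sum of two minima into four cases, the mixed cases reducing to $2w(i,k) + w(k,\bark) \geq w(i,\bari)$ and its twin, correctly obtained from coherence and two applications of transitivity in $G$. Your route is more elementary and self-contained: it needs neither the path-decomposition lemma nor the minimality induction. What the paper's heavier machinery buys is reuse --- the same lemmas drive the integer tight-closure results (Lemma~\ref{lem:ijarc-closed->tightclosed} and Theorem~\ref{thm:one-step-tight-closure}), where the floor operations would make your four-case expansion noticeably more delicate, requiring integrality to absorb the half-unit losses. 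The one presentational wrinkle on your side --- Definition~\ref{def:closure} speaks of closedness only for \emph{consistent} graphs, so consistency of $G_\rS$ must be derived from the verified triangle inequality and zero self-loops before $G_\rS$ is called closed --- is one you already flag and is easily ordered correctly.
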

Intuitively, the theorem states that strong closure can be obtained
by application of any shortest-path closure algorithm
followed by a \emph{single} local propagation step using the
constraint inference rule~\eqref{eq:strong-coherence-inference-rule}.
In contrast, in the strong closure algorithm of~\cite{Mine01b},
the outermost iterations of (a variant of) the Floyd-Warshall
shortest-path algorithm are interleaved with $n$ applications
of the inference rule~\eqref{eq:strong-coherence-inference-rule},
leading to a more complex and less efficient implementation.

\section{Integer Octagonal Graphs}
\label{sec:octagonal-Z-graph}
We now consider the case of integer octagonal constraints, i.e.,
octagonal constraints where the bounds are all integral and the variables
are only allowed to take integral values.
These can be encoded by suitably restricting the codomain of the
weight function of octagonal graphs.

\begin{definition}
\label{def:integer-octagonal-graph}
\summary{(Integer octagonal graph.)}
An \emph{integer octagonal graph} is an octagonal graph $G = (\Vpm, w)$
having an integral weight function:
\[
  \forall i,j \in \Vpm \itc w(i,j) \in \Zset \union \{+\infty\}.
\]
\end{definition}

As an integer octagonal graph is also a rational octagonal graph,
the constraint system it encodes
will be satisfiable when interpreted to take values in $\Qset$.
However, when interpreted to take values in $\Zset$,
this system may be unsatisfiable
since the arcs encoding unary constraints can have an odd weight;
we say that an octagonal graph is \emph{$\Zset$-consistent}
if its encoded integer constraint system is satisfiable.
For the same reason, the strong closure of an integer octagonal graph does
not provide a canonical form for the integer constraint system it encodes
and we need to consider the following \emph{tightening} inference rule:
\begin{equation}
\label{eq:tightening-inference-rule}
  \prooftree
    i - \bari \leq d
  \justifies
    i - \bari \leq 2 \lfloor d/2 \rfloor
  \endprooftree.
\end{equation}

\begin{definition}
\label{def:tight-closure}
\summary{(Tightly closed graph.)}
An octagonal graph $G = (\Vpm, w)$ is \emph{tightly closed}
if it is a strongly closed integer octagonal graph
and the following property holds:
\begin{equation}
\label{rule:tight-coherence}
  \forall i \in \Vpm
    \itc \mathord{\text{$w(i, \bari)$ is even}}.
\end{equation}
The \emph{tight closure} of an octagonal graph $G$ in $\Vpm$ is
\[
  \tightclosure(G)
    \defeq
      \biggraphlub
        \bigl\{\,
          G' \in \Octgraphs
        \bigm|
          \text{$G' \graphleq G$ and $G'$ is tightly closed\/}
        \,\bigr\}.
\]
\end{definition}

By property \eqref{rule:tight-coherence},
any tightly closed integer octagonal graph
will encode a satisfiable integer constraint system
and is therefore $\Zset$-consistent.
Moreover, since the encoding of any satisfiable integer constraint system
will result in a $\Zset$-consistent integer octagonal graph $G$
that satisfies property \eqref{rule:tight-coherence},
its tight closure $\tightclosure(G)$ will also be $\Zset$-consistent.
This means that, if $G$ is \emph{not} $\Zset$-consistent,
then $\tightclosure(G) = \biggraphlub \emptyset = \bot$;
that is, the tight closure operator computes
either a tightly closed graph or the bottom element.
Therefore, tight closure is a kernel operator on the lattice of
octagonal graphs, as was the case for strong closure.

An incremental closure procedure for obtaining the tight closure of an
octagonal graph was defined in~\cite{JaffarMSY94} and improved
in~\cite{HarveyS97}.
The algorithm, which is also presented and discussed
in~\cite[Section~4.3.5]{Mine05th}, maintains the tight closure
of a system of octagonal constraints by performing at most
$\bigO(n^2)$ operations each time a new constraint is added:
thus, for $m$ constraints, the worst case complexity is $\bigO(m n^2)$.
In particular, for the case of a dense system of octagonal constraints
where $m \in \bigO(n^2)$, the worst case complexity is $\bigO(n^4)$.

The following theorem shows that a more efficient tight closure algorithm
can be obtained by a simple modification to the improved strong closure
algorithm of Theorem~\ref{thm:one-step-strong-closure}.
Basically,
inference rule~\eqref{eq:tightening-inference-rule}
must be applied to ensure property \eqref{rule:tight-coherence}
holds before applying
inference rule~\eqref{eq:strong-coherence-inference-rule}.

\begin{theorem}
\label{thm:one-step-tight-closure}
Let $G = (\Vpm, w)$ be a closed integer octagonal graph.
Consider the graph $G_\rT = (\Vpm, w_\lowrT)$,
where $w_\lowrT$ is defined, for each $i,j \in \Vpm$, by
\[
  w_\lowrT(i, j)
    \defeq
      \min
        \biggl\{
          w(i, j),
          \Bigl\lfloor \frac{w(i, \bari)}{2} \Bigr\rfloor
            + \Bigl\lfloor \frac{w(\barj, j)}{2} \Bigr\rfloor
        \biggr\}.
\]
Then, if $G_\rT$ is an octagonal graph, $G_\rT = \tightclosure(G)$.
\end{theorem}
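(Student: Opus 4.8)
The plan is to exploit the characterisation of $\tightclosure(G)$ as a least upper bound. Writing $S \defeq \{\, G' \in \Octgraphs \mid G' \graphleq G \text{ and } G' \text{ is tightly closed} \,\}$, it suffices to prove that $G_\rT$ itself belongs to $S$ and is an upper bound of $S$; for then $G_\rT$ is the maximum of $S$, hence its least upper bound. Throughout I would freely use the defining inequality $w_\lowrT(i,j) \leq \lfloor w(i,\bari)/2\rfloor + \lfloor w(\barj,j)/2\rfloor$, the closure of $G$ (via Lemma~\ref{lem:trans-rule-star}), the coherence identity $w(i,j) = w(\barj,\bari)$, and the elementary fact that $\lfloor x + m\rfloor = \lfloor x\rfloor + m$ for every integer $m$.

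The upper-bound direction is the routine one. Given any tightly closed $G' = (\Vpm, w') \graphleq G$, I would show $w'(i,j) \leq w_\lowrT(i,j)$ for all $i,j$. Since $w' \leq w$, only $w'(i,j) \leq \lfloor w(i,\bari)/2\rfloor + \lfloor w(\barj,j)/2\rfloor$ needs attention. Strong closure of $G'$ gives $2 w'(i,j) \leq w'(i,\bari) + w'(\barj,j)$, and tight coherence makes $w'(i,\bari)$ and $w'(\barj,j)$ even; being even integers bounded above by the integers $w(i,\bari)$ and $w(\barj,j)$, they are in fact bounded by $2\lfloor w(i,\bari)/2\rfloor$ and $2\lfloor w(\barj,j)/2\rfloor$, since $2\lfloor N/2\rfloor$ is the largest even integer not exceeding $N$. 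Dividing by two yields the claim.

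For membership I would verify each clause of tight-closedness for $G_\rT$. The inequality $G_\rT \graphleq G$ and integrality are immediate, and a short computation using $\overline{\bari}=i$ shows $w_\lowrT(i,\bari) = 2\lfloor w(i,\bari)/2\rfloor$ (and likewise $w_\lowrT(\barj,j) = 2\lfloor w(\barj,j)/2\rfloor$), which is even, giving tight coherence, and, combined with the defining inequality, yields strong coherence $2 w_\lowrT(i,j) \leq w_\lowrT(i,\bari) + w_\lowrT(\barj,j)$ for free; coherence of $G_\rT$ is part of the standing hypothesis and in any case is inherited from $G$. The real work is transitivity, $w_\lowrT(i,j) \leq w_\lowrT(i,k) + w_\lowrT(k,j)$, which I would attack by a four-way split according to which argument of the $\min$ realises $w_\lowrT(i,k)$ and $w_\lowrT(k,j)$. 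When both are realised by the untightened term $w$, ordinary transitivity of $G$ suffices. In the two mixed cases I would bound the tightened term using a length-three path in $G$ --- e.g.\ $\barj\,\bar{k}\,k\,j$, whose weight telescopes via coherence to $w(\bar{k},k) + 2w(k,j)$ --- and then push a floor through the integer summand.

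The main obstacle is the remaining case, where both minima are tightened, together with the zero-self-loop requirement $w_\lowrT(i,i) = 0$. Both reduce to showing that a sum of the form $\lfloor w(\bar{k},k)/2\rfloor + \lfloor w(k,\bar{k})/2\rfloor$ is nonnegative, and this is \emph{false} in general for a merely rationally consistent $G$: rational consistency only forces $w(\bar{k},k) + w(k,\bar{k}) \geq 0$, whereas two integers can sum to a nonnegative value while their halved floors sum to $-1$ (take $+1$ and $-1$). This is exactly the phenomenon that signals a $\Zset$-inconsistent system. The point where the hypothesis ``$G_\rT$ is an octagonal graph'' becomes indispensable is precisely here: consistency of $G_\rT$ forbids a strictly negative self-loop, and since $w_\lowrT(\bar{k},\bar{k}) = \min\{0,\, \lfloor w(\bar{k},k)/2\rfloor + \lfloor w(k,\bar{k})/2\rfloor\}$, this consistency forces the offending floor-sum to be nonnegative, discharging both the self-loop clause and the final transitivity case at once. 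I expect confirming that this single hypothesis cleanly covers both gaps to be the crux of the argument.
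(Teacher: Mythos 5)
Your proposal is correct, and it takes a genuinely different route from the paper's. The paper never verifies tight-closedness of $G_\rT$ directly: it first proves Lemma~\ref{lem:ijarc-closed->tightclosed}, a pointwise characterization asserting that every entry $w^\rT(z_1,z_2)$ of $\tightclosure(G)$ \emph{equals} either $w(z_1,z_2)$ or $\lfloor w(z_1,\barz_1)/2\rfloor + \lfloor w(\barz_2,z_2)/2\rfloor$, established by contraposition on a $\graphleq$-minimal counterexample using two path-decomposition lemmas (Lemmas~\ref{lem:closed->1-step-strong-coherence} and~\ref{lem:closed->1-step-tight-coherence}, the first of which is recycled from the rational case); the theorem then follows by sandwiching, since $w^\rT \leq w_\lowrT$ holds by the definitions and the lemma says $w^\rT(i,j)$ attains one of the two arguments of the $\min$. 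You instead exhibit $G_\rT$ as the \emph{maximum} of the set defining $\tightclosure(G)$: your upper-bound half is essentially the computation the paper compresses into ``by Definitions'' (even integers below $N$ are below $2\lfloor N/2\rfloor$), and your membership half reduces to transitivity of $w_\lowrT$, which your four-way case split handles correctly --- the mixed cases via the telescoping paths $\barj\,\bark\,k\,j$ and $i\,k\,\bark\,\bari$ together with Lemma~\ref{lem:trans-rule-star} and $\lfloor x+m\rfloor = \lfloor x\rfloor + m$, and both the doubly-tightened case and the zero self-loops via the observation that consistency of $G_\rT$ forces $\lfloor w(\bark,k)/2\rfloor + \lfloor w(k,\bark)/2\rfloor \geq 0$. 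That last point is also where you correctly locate the role of the hypothesis that $G_\rT$ is an octagonal graph (the paper uses it instead to guarantee $\tightclosure(G) \neq \bot$ so that its lemma applies). Your argument is more elementary and self-contained, dispensing with the auxiliary lemmas entirely; what the paper's route buys is reuse of the scaffolding already built for Theorem~\ref{thm:one-step-strong-closure} and a strictly stronger, independently useful statement (exact attainment of one of the two candidate values rather than mere domination).
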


\begin{figure}
\begin{align*}
& \kw{procedure} \;
    \texttt{tight\_closure\_if\_consistent}
      (\kw{var} \; w\arrayrange{0}{2n-1}\arrayrange{0}{2n-1}) \\
& \qquad
  \{ \text{ Classical Floyd-Warshall: $\bigO(n^3)$ } \} \\
& \qquad
    \kw{for} \; k := 0 \; \kw{to} \; 2n-1 \; \kw{do} \\
& \qquad\qquad
    \kw{for} \; i := 0 \; \kw{to} \; 2n-1 \; \kw{do} \\
& \qquad\qquad\qquad
      \kw{for} \; j := 0 \; \kw{to} \; 2n-1 \; \kw{do} \\
& \qquad\qquad\qquad\qquad
        w[i,j] := \min\bigl(
                        w[i,j],
                        w[i,k] + w[k,j]
                      \bigr); \\
& \qquad
  \{ \text{ Tight coherence: $\bigO(n^2)$ } \} \\
& \qquad
  \kw{for} \; i := 0 \; \kw{to} \; 2n-1 \; \kw{do} \\
& \qquad\qquad
    \kw{for} \; j := 0 \; \kw{to} \; 2n-1 \; \kw{do} \\
& \qquad\qquad\qquad
      w[i,j] := \min\Bigl(
                      w[i,j],
                      \mathrm{floor}\bigl(w[i,\bari]/2\bigr)
                        + \mathrm{floor}\bigl(w[\barj,j]/2\bigr)
                    \Bigr);
\end{align*}
\caption{A $\bigO(n^3)$ tight closure algorithm for $\Zset$-consistent
integer octagonal graphs}
\label{fig:consistent-tight-closure}
\end{figure}
Figure~\ref{fig:consistent-tight-closure}
shows the pseudo-code for a $\bigO(n^3)$ tight closure algorithm
based on Theorem~\ref{thm:one-step-tight-closure} and
on the classical Floyd-Warshall shortest-path closure algorithm.
Note that the pseudo-code in Figure~\ref{fig:consistent-tight-closure}
assumes that the data structure recording the weight function $w$,
here denoted to be similar to a bidimensional array,
automatically implements the coherence assumption for octagonal graphs
(i.e., property~\eqref{rule:coherence}
of Definition~\ref{def:octagonal-graph}).

In the case of sparse graphs, a better complexity bound can be
obtained by modifying the code in Figure~\ref{fig:consistent-tight-closure}
so as to compute the shortest path closure
using Johnson's algorithm~\cite{CormenLR90}:
the worst case complexity of such an implementation will be
$\bigO(n^2 \log n + mn)$,
which significantly improves upon the $\bigO(mn^2)$
worst case complexity of~\cite{HarveyS97,JaffarMSY94}
when, e.g., $m \in \bigTheta(n)$.
However, as observed elsewhere~\cite{Mine05th,VenetB04},
some of the targeted applications (e.g., static analysis)
typically require the computation of graphs that are dense,
so that the Floyd-Warshall algorithm is often a better choice
from a practical perspective.

It is possible to define an incremental variant of the tight closure
algorithm in Figure~\ref{fig:consistent-tight-closure}, which is simply
based on the corresponding incremental version of the Floyd-Warshall
shortest path closure algorithm. In such a case, we obtain the same
worst case complexity of~\cite{HarveyS97,JaffarMSY94}.

The proof of Theorem~\ref{thm:one-step-tight-closure}
relies on a few auxiliary lemmas.
The first two were also used in~\cite{BagnaraHMZ05TR}
for the formal proof of Theorem~\ref{thm:one-step-strong-closure} above
(for their detailed proofs, see Lemmas~9 and 10 in~\cite{BagnaraHMZ05TR}).

\begin{lemma}
\label{lem:ijarc-octagonal->closed}
Let $G =(\Vpm, w)$ be an octagonal graph,
$G^\star = (\Vpm, w^\star) \defeq \closure(G)$
and $(z_1, z_2)$ be an arc in $G^\star$.
Then there exists a simple path
$\pi = z_1 \cdots z_2$ in $G$ such that
$w^\star(z_1, z_2) = w(\pi)$.
\end{lemma}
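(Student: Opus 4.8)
The plan is to show that $w^\star(z_1, z_2)$ equals the least weight of a path from $z_1$ to $z_2$ in $G$, and that this least weight is realised by a \emph{simple} path; the desired $\pi$ is then exactly such a minimal simple path. Throughout I will rely on the consistency of $G$ (equivalently, the absence of negative-weight cycles) and on the fact that, by Definition~\ref{def:closure} together with the kernel-operator property of closure, $G^\star = \closure(G)$ is the greatest closed graph that is $\graphleq G$.

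First I would introduce the shortest-path graph $G^\delta = (\Vpm, \delta)$, where $\delta(i,j)$ denotes the infimum of $w(\pi)$ over all paths $\pi = i \cdots j$ in $G$, with $\delta(i,j) = +\infty$ when no such path exists. Because $\Vpm$ is finite there are finitely many simple paths between any two nodes, and since $G$ has no negative cycles any path can be turned into a simple path of no greater weight by excising its cycles (each excision removes a non-negative amount of weight); hence, whenever $\delta(i,j)$ is finite it is attained by a simple path. A routine check then shows that $G^\delta$ is a consistent, closed graph: \eqref{rule:zero-weight-self-loops} holds because the trivial one-node path has weight $0$ and no cycle is negative, while \eqref{rule:transitivity} is the triangle inequality obtained by concatenating shortest paths. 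Finally $G^\delta \graphleq G$, because each single arc $(i,j)$ is itself a path, giving $\delta(i,j) \leq w(i,j)$.

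The equality $w^\star = \delta$ now follows from results already at hand. For the inequality $\delta \leq w^\star$, observe that $G^\delta$ is a closed graph below $G$, so the characterisation of $\closure(G)$ as the greatest such graph yields $G^\delta \graphleq G^\star$, i.e.\ $\delta(z_1, z_2) \leq w^\star(z_1, z_2)$. For the reverse inequality, note that $G^\star \graphleq G$ implies that every arc of $G$ is also an arc of $G^\star$; hence a minimal simple path $\pi = z_1 \cdots z_2$ in $G$ is also a path in the closed graph $G^\star$, and Lemma~\ref{lem:trans-rule-star} applied to $G^\star$ and $\pi$, followed by the arc-wise bound $w^\star(i,j) \leq w(i,j)$, gives $w^\star(z_1, z_2) \leq w^\star(\pi) \leq w(\pi) = \delta(z_1, z_2)$, where $w^\star(\pi)$ is the weight of $\pi$ computed with $w^\star$.

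Combining the two inequalities gives $w^\star(z_1, z_2) = \delta(z_1, z_2) = w(\pi)$ with $\pi$ simple. The path $\pi$ indeed exists: since $(z_1, z_2)$ is an arc of $G^\star$ we have $w^\star(z_1, z_2) < +\infty$, whence $\delta(z_1, z_2) < +\infty$ by the first inequality, so a finite-weight---and therefore simple---minimal path is available (for $z_1 = z_2$ it is the one-node path). I expect the only genuinely delicate point to be the inequality $\delta \leq w^\star$, which rests on the least upper bound defining $\closure(G)$ being itself closed, so that $\closure(G)$ is truly the greatest closed graph below $G$ and therefore dominates $G^\delta$. The remaining steps---the cycle-excision argument and the two uses of Lemma~\ref{lem:trans-rule-star} and of reductivity---are routine; note also that the coherence property \eqref{rule:coherence} plays no role, so the statement in fact holds for every consistent graph.
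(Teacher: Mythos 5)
Your proof is correct and is the standard shortest-path characterisation of $\closure(G)$; the paper itself does not prove this lemma but defers to Lemma~9 of \cite{BagnaraHMZ05TR}, whose argument is essentially the same cycle-excision / least-weight-simple-path reasoning you use. One small misattribution: the direction that actually needs $G^\star$ to be closed is $w^\star \leq \delta$ (where you invoke Lemma~\ref{lem:trans-rule-star} on $G^\star$), whereas $\delta \leq w^\star$ follows merely from $G^\star$ being, by Definition~\ref{def:closure}, an upper bound of the set of closed graphs $\graphleq G$, of which your $G^\delta$ is a member.
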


\begin{lemma}
\label{lem:closed->1-step-strong-coherence}
Let $G =(\Vpm, w)$ be a closed octagonal graph and
$i, j \in \Vpm$ be such that $i \neq \barj$ and
$2 w(i, j) \geq w(i, \bari) + w(\barj, j)$.
Let $G_\rs^\star = (\Vpm, w_\rs^\star) \defeq \closure(G_\rs)$
where $G_\rs \defeq (\Vpm, w_\rs)$ and, for each $h_1, h_2 \in \Vpm$,
\begin{align*}
  w_\rs(h_1, h_2)
    &\defeq
      \begin{cases}
        \bigl(w(i, \bari) + w(\barj, j)\bigr)/2,
          &\text{if \(
                      (h_1, h_2)
                        \in \bigl\{ (i, j), (\barj, \bari) \bigr\}
                    \);} \\
        w(h_1, h_2),
          &\text{otherwise.}
      \end{cases}
\end{align*}
Let also $z_1, z_2 \in \Vpm$.
Then one or both of the following hold:
\begin{align*}
  w_\rs^\star(z_1, z_2)
    &= w(z_1, z_2); \\
  2 w_\rs^\star(z_1, z_2)
    &\geq w(z_1, \barz_1) + w(\barz_2, z_2).
\end{align*}
\end{lemma}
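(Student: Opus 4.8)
The plan is to reduce $w_\rs^\star(z_1,z_2)$, via shortest-path reasoning, to an expression built from the original weights $w$ together with at most two copies of the single lowered value $\delta \defeq \bigl(w(i,\bari) + w(\barj,j)\bigr)/2$, and then to verify the stated dichotomy case by case on how many times these copies occur.

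First I would record the structural facts. The hypothesis $2 w(i,j) \ge w(i,\bari) + w(\barj,j)$ gives $\delta \le w(i,j)$, so $G_\rs \graphleq G$ and hence $w_\rs^\star \le w_\rs \le w$ pointwise; in particular the first alternative $w_\rs^\star(z_1,z_2) = w(z_1,z_2)$ is equivalent to $w_\rs^\star(z_1,z_2) \ge w(z_1,z_2)$. The graph $G_\rs$ is coherent, since the two reset arcs $(i,j)$ and $(\barj,\bari)$ receive the same value $\delta$, and it is consistent, since the constraint $i - j \le \delta$ (and its coherent partner) is entailed by $G$ through the sound inference rule~\eqref{eq:strong-coherence-inference-rule}; thus $G_\rs$ is an octagonal graph and $\closure(G_\rs)$ is well defined. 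By Lemma~\ref{lem:ijarc-octagonal->closed} applied to $G_\rs$, the value $w_\rs^\star(z_1,z_2)$ is the weight of a \emph{simple} path $\pi$ in $G_\rs$. Since $G$ is closed, Lemma~\ref{lem:trans-rule-star} lets me fold every maximal subpath of $\pi$ that avoids the two reset arcs into a single arc of $G$ without increasing weight; as $\pi$ is simple, and as $i \neq \barj$ makes the two reset arcs distinct, $\pi$ uses each at most once. So $w_\rs^\star(z_1,z_2)$ has one of the forms: (0)~$w(z_1,z_2)$; (1)~$w(z_1,i)+\delta+w(j,z_2)$ or $w(z_1,\barj)+\delta+w(\bari,z_2)$; (2)~a two-$\delta$ expression in which the reset arcs occur in one of the two possible orders.

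The two workhorse inequalities, valid for every node $a$ of the closed octagonal graph $G$, come from transitivity~\eqref{rule:transitivity} along $z_1 \to a \to \bar a \to \barz_1$ and $\barz_2 \to \bar a \to a \to z_2$ combined with coherence~\eqref{rule:coherence} (which identifies $w(\bar a,\barz_1)$ with $w(z_1,a)$ and $w(\barz_2,\bar a)$ with $w(a,z_2)$):
\begin{align*}
  w(z_1,\barz_1) &\le 2\,w(z_1,a) + w(a,\bar a), \\
  w(\barz_2,z_2) &\le 2\,w(a,z_2) + w(\bar a,a).
\end{align*}
In case~(0) the first alternative holds by the remark above. In each subcase of~(1) I instantiate these two bounds at the endpoints of the reset arc used (for $(i,j)$, at $a=i$ and $a=j$; for $(\barj,\bari)$, at $a=\barj$ and $a=\bari$) and add them: using $2\delta = w(i,\bari)+w(\barj,j)$, the right-hand sides collapse exactly to $2\,w_\rs^\star(z_1,z_2)$, which is the second alternative.

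The main obstacle is case~(2), where $\pi$ traverses both reset arcs. Here the same two bounds, instantiated at the outer endpoints of the composite path, leave a residual term after comparison with $2\,w_\rs^\star(z_1,z_2)$; a short computation shows this residue equals $2\bigl(w(m,\bar m)+w(\bar m,m)\bigr)$, with $m=j$ or $m=i$ according to the order of the two arcs, and this is nonnegative because transitivity gives $w(m,\bar m)+w(\bar m,m) \ge w(m,m) = 0$. Hence the second alternative holds in both orderings, completing the dichotomy. The care needed is concentrated entirely in the coherence bookkeeping—matching each folded subpath to the correct diagonal weight $w(\cdot,\bar\cdot)$—and in treating both orderings of the reset arcs in case~(2).
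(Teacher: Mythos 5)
Your argument is correct and follows essentially the same route as the paper's (the paper defers this proof to Lemma~10 of \cite{BagnaraHMZ05TR}, but its printed proof of the analogous Lemma~\ref{lem:closed->1-step-tight-coherence} uses exactly this technique): extract a simple path via Lemma~\ref{lem:ijarc-octagonal->closed}, split on occurrences of the modified arcs, and bound $w(z_1, \barz_1)$ and $w(\barz_2, z_2)$ by reflecting subpaths through coherence and applying Lemma~\ref{lem:trans-rule-star}. Your explicit handling of the case where the simple path traverses both reset arcs is the one point where this lemma genuinely needs more than the single-arc argument of Lemma~\ref{lem:closed->1-step-tight-coherence}, and your residue computation there (reducing to $w(m, \bar{m}) + w(\bar{m}, m) \geq w(m,m) = 0$) is right.
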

Informally, Lemma~\ref{lem:closed->1-step-strong-coherence}
states that if inference rule~\eqref{eq:strong-coherence-inference-rule}
is applied
to a closed octagonal graph, then the resulting graph
 can be closed just by making further
applications of inference rule~\eqref{eq:strong-coherence-inference-rule}.
Note that, if $G$ is an integer octagonal graph
and property \eqref{rule:tight-coherence} holds,
then the derived graph $G_\rs$ will also be an integer octagonal graph.
We now state a new lemma for integer octagonal graphs showing that when
inference rule~\eqref{eq:tightening-inference-rule} is applied we
obtain a similar conclusion to that for
Lemma~\ref{lem:closed->1-step-strong-coherence}.

\begin{lemma}
\label{lem:closed->1-step-tight-coherence}
Let $G =(\Vpm, w)$ be a closed integer octagonal graph
and $i \in \Vpm$.
Let $G_\rt^\star \defeq \closure(G_\rt)$
where $G_\rt \defeq (\Vpm, w_\rt)$ is an octagonal graph
and, for each $h_1, h_2 \in \Vpm$,
\begin{align}
\label{eq:closed->1-step-tight-coherence:os-def}
  w_\rt(h_1, h_2)
    &\defeq
      \begin{cases}
        w(i, \bari) -1,
          &\text{if $(h_1, h_2) = (i, \bari)$;} \\
        w(h_1, h_2),
          &\text{otherwise.}
      \end{cases}
\end{align}
Let $G_\rt^\star = (\Vpm, w_\rt^\star)$
and $z_1, z_2 \in \Vpm$.
Then one or both of the following hold:
\begin{align}
\label{eq:closed->1-step-tight-coherence:simple-eq}
  w_\rt^\star(z_1, z_2)
    &= w(z_1, z_2), \\
\label{eq:closed->1-step-tight-coherence:tight-coherence-ineq}
  w_\rt^\star(z_1, z_2)
    &\geq
      \Bigl\lfloor \frac{w(z_1, \barz_1)}{2} \Bigr\rfloor
        + \Bigl\lfloor \frac{w(\barz_2, z_2)}{2} \Bigr\rfloor.
\end{align}
\end{lemma}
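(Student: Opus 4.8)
The plan is to reduce the statement to the analysis of a single shortest path in $G_\rt$ and then to exploit coherence together with the integrality of the weights. Write $c \defeq w(i, \bari)$ and observe that the decrement by one is the effect of the tightening rule~\eqref{eq:tightening-inference-rule} on an arc of \emph{odd} weight: for even $c$ the rule leaves $w$ unchanged, so the case of interest is $c$ odd, and it is precisely the parity of $c$ that will power the rounding step below, so I carry $c$ odd throughout. If $(z_1, z_2)$ is not an arc of $G_\rt^\star$, then $w_\rt^\star(z_1, z_2) = +\infty$ and \eqref{eq:closed->1-step-tight-coherence:tight-coherence-ineq} holds trivially; otherwise, since $G_\rt$ is an octagonal graph, Lemma~\ref{lem:ijarc-octagonal->closed} supplies a \emph{simple} path $\pi = z_1 \cdots z_2$ in $G_\rt$ with $w_\rt^\star(z_1, z_2) = w_\rt(\pi)$. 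The argument then splits according to whether the single modified arc $(i, \bari)$ occurs in $\pi$.

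If $(i, \bari)$ does not occur in $\pi$, then $w_\rt(\pi) = w(\pi)$ and $\pi$ is equally a path in $G$. Since $w_\rt \le w$ pointwise we have $G_\rt \graphleq G$, and as $G$ is closed ($\closure(G) = G$) monotonicity of closure gives $w_\rt^\star(z_1, z_2) \le w(z_1, z_2)$, while Lemma~\ref{lem:trans-rule-star} applied to $\pi$ in the closed graph $G$ gives $w(z_1, z_2) \le w(\pi) = w_\rt^\star(z_1, z_2)$; hence equality, i.e.\ \eqref{eq:closed->1-step-tight-coherence:simple-eq}. If instead $(i, \bari)$ occurs in $\pi$, then, $\pi$ being simple, it occurs exactly once, and I split $\pi$ at this arc as a path $\pi_1 \itc z_1 \cdots i$ followed by the arc $(i,\bari)$ followed by a path $\pi_2 \itc \bari \cdots z_2$, where $\pi_1, \pi_2$ are paths in $G$ avoiding $(i,\bari)$; thus $w_\rt^\star(z_1, z_2) = w(\pi_1) + w(\pi_2) + c - 1$. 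Using coherence~\eqref{rule:coherence} I form the conjugate paths $\bar\pi_1 \itc \bari \cdots \barz_1$ with $w(\bar\pi_1) = w(\pi_1)$ and $\bar\pi_2 \itc \barz_2 \cdots i$ with $w(\bar\pi_2) = w(\pi_2)$. Concatenating $\pi_1$, the arc $(i,\bari)$ and $\bar\pi_1$ gives a path $z_1 \cdots \barz_1$, and symmetrically a path $\barz_2 \cdots z_2$; Lemma~\ref{lem:trans-rule-star} then yields the two cycle bounds $w(z_1, \barz_1) \le 2 w(\pi_1) + c$ and $w(\barz_2, z_2) \le 2 w(\pi_2) + c$.

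The main obstacle is the final rounding. From $w(z_1,\barz_1) \le 2w(\pi_1) + c$ I get $w(\pi_1) \ge \frac{1}{2}\bigl(w(z_1,\barz_1) - c\bigr)$; because $w(\pi_1) \in \Zset$ and $c$ is odd, this sharpens to $w(\pi_1) \ge \lceil \frac{1}{2}(w(z_1,\barz_1) - c)\rceil = \lfloor \frac{1}{2}w(z_1,\barz_1)\rfloor - \frac{c-1}{2}$, the last identity being legitimate precisely because $\frac{c-1}{2} \in \Zset$. The symmetric bound holds for $w(\pi_2)$. Substituting into $w_\rt^\star(z_1,z_2) = w(\pi_1) + w(\pi_2) + c - 1$, the two copies of $-\frac{c-1}{2}$ cancel against $c - 1$, leaving exactly $\lfloor \frac{1}{2}w(z_1,\barz_1)\rfloor + \lfloor \frac{1}{2}w(\barz_2,z_2)\rfloor$, which is \eqref{eq:closed->1-step-tight-coherence:tight-coherence-ineq}; note that in this case all arcs of $\pi$ are finite, so $c$, $w(\pi_1)$, $w(\pi_2)$ and both floors are finite and the manipulation is valid. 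I expect this rounding to be the delicate point: the naive real-valued bound only gives $w_\rt^\star(z_1,z_2) \ge \frac{1}{2}w(z_1,\barz_1) + \frac{1}{2}w(\barz_2,z_2) - 1$, which is strictly weaker than the claimed floor bound; absorbing the stray $-1$ into the two separate floors is exactly where the oddness of $w(i,\bari)$ is indispensable, and any attempt to bound $w(\pi_1)+w(\pi_2)$ jointly rather than termwise loses it.
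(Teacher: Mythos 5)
Your proof follows the paper's own argument almost step for step: the same appeal to Lemma~\ref{lem:ijarc-octagonal->closed} to extract a simple shortest path $\pi$ in $G_\rt$, the same case split on whether the single modified arc $(i,\bari)$ occurs in $\pi$, and the same conjugate paths $\pi_1\pathconc(i\,\bari)\pathconc\bar{\pi}_1$ and $\bar{\pi}_2\pathconc(i\,\bari)\pathconc\pi_2$ bounded below via coherence and Lemma~\ref{lem:trans-rule-star}. The one place you deviate is that you assume $c=w(i,\bari)$ odd, which is not a hypothesis of the lemma as stated; read against the literal statement, that is a gap. But your instinct is exactly right, and the deviation is in your favour: without the oddness assumption the lemma is actually false. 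Take $n=1$, $w(0,0)=w(1,1)=0$, $w(0,1)=2$, $w(1,0)=0$ and $i=0$; this $G$ is a closed integer octagonal graph, $G_\rt$ is an octagonal graph with $w_\rt(0,1)=1$, and $w_\rt^\star(0,1)=1$ satisfies neither $w_\rt^\star(0,1)=w(0,1)=2$ nor $w_\rt^\star(0,1)\geq\lfloor w(0,1)/2\rfloor+\lfloor w(0,1)/2\rfloor=2$. Since the lemma is only ever invoked in case~(\ref{item:ijarc-closed->tightclosed:tight-coherence}) of the proof of Lemma~\ref{lem:ijarc-closed->tightclosed}, i.e.\ precisely when $w(i,\bari)$ is odd, the statement you prove is the one that is actually needed.

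Moreover, the rounding step you single out as the delicate point is precisely where the paper's proof is too quick. The paper derives $w_\rt(\pi)\geq\frac{1}{2}w(z_1,\barz_1)-\frac{1}{2}+\frac{1}{2}w(\barz_2,z_2)-\frac{1}{2}$ and then asserts that this is at least $\lfloor w(z_1,\barz_1)/2\rfloor+\lfloor w(\barz_2,z_2)/2\rfloor$; that final inequality is off by one when $w(z_1,\barz_1)$ and $w(\barz_2,z_2)$ are both even, and integrality of $w_\rt(\pi)$ does not rescue it in that parity case (the counterexample above is exactly this failure mode). Your termwise argument --- $w(\pi_1)\geq\lceil\frac{1}{2}(w(z_1,\barz_1)-c)\rceil=\lfloor w(z_1,\barz_1)/2\rfloor-\frac{c-1}{2}$, valid because $w(\pi_1)\in\Zset$ and $c$ is odd, and symmetrically for $\pi_2$ --- is the correct repair, and is where the oddness of $c$ does its work. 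In short: same route as the paper, but your write-up closes a genuine gap in the paper's own proof at the cost of (rightly) adding the hypothesis that $w(i,\bari)$ is odd.
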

\begin{proof}
By hypothesis and Definition~\ref{def:closure},
$G_\rt^\star \graphleq G_\rt \graphleq G$.
If $(z_1, z_2)$ is not an arc in $G_\rt^\star$,
then $w_\rt^\star(z_1, z_2) = +\infty$;
thus, as $G_\rt^\star \graphleq G$, we also have
$w(z_1, z_2) = +\infty$
and hence property~\eqref{eq:closed->1-step-tight-coherence:simple-eq} holds.
Suppose now that $(z_1, z_2)$ is an arc in $G_\rt^\star$.
Then we can apply Lemma~\ref{lem:ijarc-octagonal->closed},
so that there exists a simple path
$\pi = z_1 \cdots z_2$ in $G_\rt$ such that
$w_\rt^\star(z_1, z_2) = w_\rt(\pi)$.

Suppose first that $w_\rt(\pi) = w(\pi)$.
Then, as $G$ is closed,
by Lemma~\ref{lem:trans-rule-star} we obtain
$w(\pi) \geq w(z_1, z_2)$
so that $w_\rt^\star(z_1, z_2) \geq w(z_1, z_2)$.
However $G_\rt^\star \graphleq G$ so that
$w_\rt^\star(z_1, z_2) \leq w(z_1, z_2)$ and therefore
property~\eqref{eq:closed->1-step-tight-coherence:simple-eq} holds.

Secondly, suppose that $w_\rt(\pi) \neq w(\pi)$.
Then, by Equation~\eqref{eq:closed->1-step-tight-coherence:os-def},
$(i, \bari)$ must be an arc in $\pi$, so that
\begin{equation}
\label{eq:closed->1-step-tight-coherence:only-ij}
  \pi = \pi_1 \pathconc (i \, \bari) \pathconc \pi_2,
\end{equation}
where
$\pi_1 = z_1 \cdots i$,
$\pi_2 = j \cdots z_2$
are simple paths in $G_\rt$ that do not contain the arc $(i, \bari)$.
Therefore,
by Equation~\eqref{eq:closed->1-step-tight-coherence:os-def},
we have $w_\rt(\pi_1) = w(\pi_1)$, $w_\rt(\pi_2) = w(\pi_2)$.

Consider~\eqref{eq:closed->1-step-tight-coherence:only-ij} and let%
\footnote{%
If $\pi = j_0 \cdots j_p$ is a path in a graph in $\Vpm$,
then $\bar{\pi}$ denotes the path $\barj_p \cdots \barj_0$.}
\begin{align*}
  \pi_1' &= \pi_1 \pathconc (i \, \bari) \pathconc \bar{\pi}_1,
  &\pi_2' &= \bar{\pi}_2 \pathconc (i \, \bari) \pathconc \pi_2. \\
\intertext{%
As $G$ is an octagonal graph, we have
$w(\pi_1) = w(\bar{\pi}_1)$
and $w(\pi_2) = w(\bar{\pi}_2)$ so that
}
  w(\pi_1') &= 2 w(\pi_1) + w(i, \bari),
  &w(\pi_2') &= 2 w(\pi_2) + w(i, \bari).\\
\intertext{%
As $G$ is closed, by Lemma~\ref{lem:trans-rule-star},
}
  w(\pi_1') &\geq w(z_1, \barz_1),
  &w(\pi_2') &\geq w(\barz_2, z_2)
\intertext{%
so that
}
  w(\pi_1) + \frac{w(i, \bari)}{2}
    &\geq
      \frac{w(z_1, \barz_1)}{2},
  &w(\pi_2) + \frac{w(i, \bari)}{2}
    &\geq
      \frac{w(\barz_2, z_2)}{2}.
\end{align*}
Therefore
\begin{align*}
  w_\rt(\pi)
    &= w_\rt(\pi_1) + w_\rt(i, \bari) + w_\rt(\pi_2) \\
    &= w(\pi_1) + \frac{w(i, \bari) - 1}{2}
         + w(\pi_2) + \frac{w(i, \bari) - 1}{2} \\
    &\geq
      \frac{w(z_1, \barz_1)}{2} - \frac{1}{2}
        + \frac{w(\barz_2, z_2)}{2} - \frac{1}{2} \\
    &\geq
      \Bigl\lfloor \frac{w(z_1, \barz_1)}{2} \Bigr\rfloor
        + \Bigl\lfloor \frac{w(\barz_2, z_2)}{2} \Bigr\rfloor.
\end{align*}
Hence, as $w_\rt^\star(z_1, z_2) = w_\rt(\pi)$, we obtain
property~\eqref{eq:closed->1-step-tight-coherence:tight-coherence-ineq},
as required.
\qed
\end{proof}

The next result, uses Lemmas~\ref{lem:closed->1-step-strong-coherence}
and~\ref{lem:closed->1-step-tight-coherence} to derive a property relating the weight
functions for a closed integer octagonal graph and its tight closure.

\begin{lemma}
\label{lem:ijarc-closed->tightclosed}
Let $G =(\Vpm, w)$ be a closed integer octagonal graph
such that $G^\rT = (\Vpm, w^\rT) \defeq \tightclosure(G)$
is an octagonal graph and let $z_1, z_2 \in \Vpm$.
Then one or both of the following hold:
\begin{align}
\label{eq:ijarc-closed->tightclosed:simple-eq}
  w^\rT(z_1, z_2)
    &= w(z_1, z_2); \\
\label{eq:ijarc-closed->tightclosed:tight-coherence-eq}
  w^\rT(z_1, z_2)
    &= \Bigl\lfloor \frac{w(z_1, \barz_1)}{2} \Bigr\rfloor
         + \Bigl\lfloor \frac{w(\barz_2, z_2)}{2} \Bigr\rfloor.
\end{align}
\end{lemma}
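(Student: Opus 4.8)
The plan is to prove the single sharper identity
\[
  w^\rT(z_1,z_2)
    = \min\Bigl\{
        w(z_1,z_2),\;
        \Bigl\lfloor \frac{w(z_1,\barz_1)}{2}\Bigr\rfloor
          + \Bigl\lfloor \frac{w(\barz_2,z_2)}{2}\Bigr\rfloor
      \Bigr\},
\]
from which the stated disjunction is immediate, because a binary minimum always equals at least one of its arguments: if it is attained by the first we obtain~\eqref{eq:ijarc-closed->tightclosed:simple-eq}, otherwise~\eqref{eq:ijarc-closed->tightclosed:tight-coherence-eq} (and when the two arguments coincide, both hold). Writing $X \defeq w(z_1,z_2)$ and $Y \defeq \lfloor w(z_1,\barz_1)/2\rfloor + \lfloor w(\barz_2,z_2)/2\rfloor$, I would establish $w^\rT(z_1,z_2) \leq \min\{X,Y\}$ and $w^\rT(z_1,z_2) \geq \min\{X,Y\}$ separately; the first is routine, while the second carries all the difficulty.

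For the upper bound, $w^\rT(z_1,z_2) \leq X$ holds simply because $G^\rT \graphleq G$. To see $w^\rT(z_1,z_2) \leq Y$, recall that $G^\rT$, being tightly closed, is in particular a strongly closed \emph{integer} octagonal graph satisfying property~\eqref{rule:tight-coherence}. Hence, where finite, $w^\rT(z_1,\barz_1)$ and $w^\rT(\barz_2,z_2)$ are \emph{even} integers bounded above by $w(z_1,\barz_1)$ and $w(\barz_2,z_2)$ respectively; since the largest even integer not exceeding an integer $d$ is $2\lfloor d/2\rfloor$, we get $w^\rT(z_1,\barz_1) \leq 2\lfloor w(z_1,\barz_1)/2\rfloor$ and likewise for the other term. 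Substituting these into the strong-coherence inequality~\eqref{rule:strong-coherence} for $G^\rT$, namely $2 w^\rT(z_1,z_2) \leq w^\rT(z_1,\barz_1) + w^\rT(\barz_2,z_2)$, yields $w^\rT(z_1,z_2) \leq Y$; the cases where some weight is $+\infty$ make $Y = +\infty$ and are trivial.

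The lower bound is the crux. Here I would use the operational view of tight closure as the fixpoint obtained from the closed integer octagonal graph $G$ by a finite sequence of inference steps, each being one application of the strong-coherence rule~\eqref{eq:strong-coherence-inference-rule} followed by a re-closure, or one application of the tightening rule~\eqref{eq:tightening-inference-rule} followed by a re-closure (integrality of the intermediate graphs is preserved, so the hypotheses of the single-step lemmas are met at each stage). As the rules are sound and only decrease weights, iterating them to a fixpoint produces the greatest tightly closed graph below $G$, i.e.\ $G^\rT$; each step is governed by Lemma~\ref{lem:closed->1-step-strong-coherence} or Lemma~\ref{lem:closed->1-step-tight-coherence}. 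I would then show, by induction along this sequence, that every intermediate closed graph $H = (\Vpm, w_H)$ satisfies the invariant, stated in terms of the \emph{original} $w$,
\[
  w_H(z_1,z_2)
    \geq
      \min\Bigl\{
        w(z_1,z_2),\;
        \Bigl\lfloor \frac{w(z_1,\barz_1)}{2}\Bigr\rfloor
          + \Bigl\lfloor \frac{w(\barz_2,z_2)}{2}\Bigr\rfloor
      \Bigr\}
  \qquad \text{for all } z_1, z_2 \in \Vpm.
\]
Applied to $H = G^\rT$ this gives exactly $w^\rT(z_1,z_2) \geq \min\{X,Y\}$, and combined with the upper bound it finishes the proof. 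The base case $H = G$ is trivial.

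For the inductive step, applying Lemma~\ref{lem:closed->1-step-strong-coherence} (resp.\ Lemma~\ref{lem:closed->1-step-tight-coherence}) to $H$ and collapsing its disjunction into a minimum gives, for the next graph $H'$, the bound $w_{H'}(z_1,z_2) \geq \min\{ w_H(z_1,z_2),\, \lfloor w_H(z_1,\barz_1)/2\rfloor + \lfloor w_H(\barz_2,z_2)/2\rfloor \}$. The main obstacle is that this is phrased through the \emph{current} weights $w_H$, whereas the invariant speaks of the original $w$; bridging the two is the delicate point. The trick is to invoke the induction hypothesis not only at $(z_1,z_2)$ but also at the two diagonal arcs: specialising the invariant to $(z_1,\barz_1)$, and using that the complement of $\barz_1$ is $z_1$, collapses its right-hand side to $2\lfloor w(z_1,\barz_1)/2\rfloor$, so $w_H(z_1,\barz_1) \geq 2\lfloor w(z_1,\barz_1)/2\rfloor$ and hence, by monotonicity of $\lfloor \cdot/2\rfloor$, $\lfloor w_H(z_1,\barz_1)/2\rfloor \geq \lfloor w(z_1,\barz_1)/2\rfloor$; the arc $(\barz_2,z_2)$ is treated symmetrically. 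Together with $w_H(z_1,z_2) \geq \min\{X,Y\}$ from the hypothesis, these show that both arguments of the minimum bounding $w_{H'}(z_1,z_2)$ are themselves $\geq \min\{X,Y\}$, so the invariant propagates to $H'$. Throughout, $+\infty$ weights are handled by the usual conventions (e.g.\ $\lfloor +\infty/2\rfloor = +\infty$), under which all displayed inequalities remain valid.
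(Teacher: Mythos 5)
Your argument is, at its core, the paper's own proof in different clothing: both rest on the same two single-step results (Lemmas~\ref{lem:closed->1-step-strong-coherence} and~\ref{lem:closed->1-step-tight-coherence}), both use the same key trick of specialising the bound at the diagonal arcs $(z_1,\barz_1)$ and $(\barz_2,z_2)$ so as to convert floors of the \emph{current} weights into floors of the \emph{original} ones, and both ultimately perform a well-founded induction over the integer-weighted graphs lying between $\tightclosure(G)$ and $G$. The difference is one of packaging: the paper argues by contraposition from a $\graphleq$-minimal counterexample and needs only a single step of descent (a graph $G_1^\star \graphlt G$ with the same tight closure), whereas you build the whole descending chain explicitly and push an invariant forward along it. Your packaging buys a slightly stronger intermediate statement (the $\min$ identity, which the paper only assembles later, inside the proof of Theorem~\ref{thm:one-step-tight-closure}), but at the price of having to justify an operational characterisation of $\tightclosure(G)$ as the limit of the iteration --- something the paper's formulation sidesteps entirely.

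That characterisation is where your two genuine, though patchable, gaps lie. First, the parenthetical claim that ``integrality of the intermediate graphs is preserved'' is false for an arbitrary interleaving of the two rules: a strong-coherence step at $(i,j)$ installs the weight $\bigl(w(i,\bari)+w(\barj,j)\bigr)/2$, which is a half-integer whenever that sum is odd, after which Lemma~\ref{lem:closed->1-step-tight-coherence} no longer applies downstream. You must schedule the iteration so that strong-coherence steps are taken only when every diagonal weight is already even (tightening has priority); this is exactly what the paper's case split (i)/(ii) encodes, and note that re-closing after a tightening can make \emph{other} diagonals odd, so the two kinds of step genuinely have to be interleaved under that priority rather than batched. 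Second, the claim that the iteration terminates in a graph equal to $\tightclosure(G)$ needs an argument: termination because the finitely many weights are integers, strictly decreasing at each step, and bounded below by $w^\rT$; equality because the terminal graph is tightly closed and $\graphleq G$, hence $\graphleq G^\rT$ by definition of tight closure as a least upper bound, while conversely every intermediate graph dominates every tightly closed graph below $G$ and hence dominates $G^\rT$ --- this last point is the order-theoretic soundness of the two rules, which you assert but should verify. With those two points supplied, your upper bound, base case and inductive step are all correct.
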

\begin{proof}
The proof is by contraposition; thus we assume that
neither~\eqref{eq:ijarc-closed->tightclosed:simple-eq}
nor~\eqref{eq:ijarc-closed->tightclosed:tight-coherence-eq} hold.
Without loss of generality,
let the graph $G$ be $\mathord{\graphleq}$-minimal in the set
of all closed integer octagonal graphs such that $\tightclosure(G) = G^\rT$
and for which neither~\eqref{eq:ijarc-closed->tightclosed:simple-eq}
nor~\eqref{eq:ijarc-closed->tightclosed:tight-coherence-eq} hold.
Clearly the negation of~\eqref{eq:ijarc-closed->tightclosed:simple-eq}
implies that $G \neq G^\rT$, so that $G^\rT \graphlt G$.

As $G$ is closed but not tightly closed,
by Definitions~\ref{def:strong-closure} and~\ref{def:tight-closure},
it follows that there exist $i,j \in \Vpm$ such that either
\begin{enumerate}[(i)]
\item
\label{item:ijarc-closed->tightclosed:tight-coherence}
$i = \barj$ and $w(i, \bari)$ is odd; or
\item
\label{item:ijarc-closed->tightclosed:strong-coherence}
property \eqref{rule:tight-coherence} holds and
$2 w(i, j) > w(i, \bari) + w(\barj, j)$.
\end{enumerate}
Consider graph $G_1 = (\Vpm, w_1)$
where the weight function $w_1$ is defined,
for all $h_1,h_2 \in \Vpm$, by
\begin{equation*}
  w_1(h_1, h_2)
    \defeq
      \begin{cases}
        \Bigl\lfloor \frac{w(i, \bari)}{2} \Bigr\rfloor
          + \Bigl\lfloor \frac{w(\barj, j)}{2} \Bigr\rfloor,
          &\text{if \(
                      (h_1, h_2)
                        \in \bigl\{ (i, j), (\barj, \bari) \bigr\}
                    \);} \\
        w(h_1, h_2),
          &\text{otherwise.}
      \end{cases}
\end{equation*}

Let $G_1^\star = \closure(G_1)$.
By Definitions~\ref{def:closure},~\ref{def:strong-closure}
and~\ref{def:tight-closure},
\begin{equation}
\label{eq:ijarc-closed->tightclosed:graph-relations}
G^\rT \graphleq G_1^\star \graphleq G_1 \graphlt G.
\end{equation}
Thus $\tightclosure(G_1^\star) = G^\rT$ so that,
by the minimality assumption on $G$,
one or both of the following hold:
\begin{align}
\label{eq:ijarc-closed->tightclosed:simple-eq-osc}
  w^\rT(z_1, z_2) &= w_1^\star(z_1, z_2);\\
\label{eq:ijarc-closed->tightclosed:tight-coherence-eq-osc}
  w^\rT(z_1, z_2)
    &= \Bigl\lfloor \frac{w_1^\star(z_1, \barz_1)}{2} \Bigr\rfloor
         + \Bigl\lfloor \frac{w_1^\star(\barz_2, z_2)}{2} \Bigr\rfloor.
\end{align}

As $G^\rT \neq \bot$,
by \eqref{eq:ijarc-closed->tightclosed:graph-relations}, $G_1$ is consistent.
Therefore, by construction, $G_1$ is an integer octagonal graph.
If property~(\ref{item:ijarc-closed->tightclosed:tight-coherence})
holds for $i,j$, then  Lemma~\ref{lem:closed->1-step-tight-coherence}
can be applied and,
if property~(\ref{item:ijarc-closed->tightclosed:strong-coherence})
holds for $i,j$,
then  Lemma~\ref{lem:closed->1-step-strong-coherence} can be applied
and also,
since property \eqref{rule:tight-coherence} holds,
both $w_1(z_1, \barz_1)$ and $w(\barz_2, z_2)$ are even.
Hence, letting $G_1^\star \defeq (\Vpm, w_1^\star)$,
one or both of the following hold:
\begin{align}
\label{eq:ijarc-closed->tightclosed:simple-eq-osc-2}
  w_1^\star(z_1, z_2)
    &= w(z_1, z_2); \\
\label{eq:ijarc-closed->tightclosed:tight-coherence-eq-osc-2}
  w_1^\star(z_1, z_2)
    &\geq
      \Bigl\lfloor \frac{w(z_1, \barz_1)}{2} \Bigr\rfloor
        + \Bigl\lfloor \frac{w(\barz_2, z_2)}{2} \Bigr\rfloor. \\
\intertext{%
Again by Lemmas~\ref{lem:closed->1-step-strong-coherence}
and~\ref{lem:closed->1-step-tight-coherence},
}
\notag
  w_1^\star(z_1, \barz_1)
    &\geq
      2 \Bigl\lfloor \frac{w(z_1, \barz_1)}{2} \Bigr\rfloor, \\
\notag
  w_1^\star(\barz_2, z_2)
    &\geq
      2 \Bigl\lfloor \frac{w(\barz_2, z_2)}{2} \Bigr\rfloor;
\end{align}
since the lower bounds for $w_1^\star(z_1, \barz_1)$
and $w_1^\star(\barz_2, z_2)$ are even integers, we obtain
\begin{equation}
\label{eq:ijarc-closed->tightclosed:tight-coherence-eq-osc-3}
  \Bigl\lfloor \frac{w_1^\star(z_1, \barz_1)}{2} \Bigr\rfloor
    + \Bigl\lfloor \frac{w_1^\star(\barz_2, z_2)}{2} \Bigr\rfloor
    \geq
      \Bigl\lfloor \frac{w(z_1, \barz_1)}{2} \Bigr\rfloor
        + \Bigl\lfloor \frac{w(\barz_2, z_2)}{2} \Bigr\rfloor.
\end{equation}

Suppose first that~\eqref{eq:ijarc-closed->tightclosed:simple-eq-osc}
and~\eqref{eq:ijarc-closed->tightclosed:simple-eq-osc-2} hold.
Then by transitivity
we obtain~\eqref{eq:ijarc-closed->tightclosed:simple-eq},
contradicting the contrapositive assumption for $G$.

If~\eqref{eq:ijarc-closed->tightclosed:simple-eq-osc}
and~\eqref{eq:ijarc-closed->tightclosed:tight-coherence-eq-osc-2} hold,
then it follows
\begin{equation}
\label{eq:ijarc-closed->tightclosed:tight-coherence-eq-osc-4}
  w^\rT(z_1, z_2)
    \geq
      \Bigl\lfloor \frac{w(z_1, \barz_1)}{2} \Bigr\rfloor
        + \Bigl\lfloor \frac{w(\barz_2, z_2)}{2} \Bigr\rfloor.
\end{equation}
On the other hand,
if~\eqref{eq:ijarc-closed->tightclosed:tight-coherence-eq-osc} holds,
then, by~\eqref{eq:ijarc-closed->tightclosed:tight-coherence-eq-osc-3},
we obtain again
property~\eqref{eq:ijarc-closed->tightclosed:tight-coherence-eq-osc-4}.
However, by Definition~\ref{def:tight-closure}
we also have
\begin{align*}
  w^\rT(z_1, z_2)
    &\leq
      \Bigl\lfloor \frac{w(z_1, \barz_1)}{2} \Bigr\rfloor
        + \Bigl\lfloor \frac{w(\barz_2, z_2)}{2} \Bigr\rfloor.
\end{align*}
By combining this inequality
with~\eqref{eq:ijarc-closed->tightclosed:tight-coherence-eq-osc-4}
we obtain~\eqref{eq:ijarc-closed->tightclosed:tight-coherence-eq},
contradicting the contrapositive assumption for $G$.
\qed
\end{proof}

\begin{delayedproof}[of Theorem~\ref{thm:one-step-tight-closure}]
Let $G^\rT \defeq \tightclosure(G)$.
By definition of $G_\rT$,
$G_\rT \graphleq G$ so that
$\tightclosure(G_\rT) \graphleq G^\rT$.
As $G_\rT$ is an octagonal graph, $G_\rT$ is consistent,
and hence $G^\rT \neq \bot$;
let $G^\rT = (\Vpm, w^\rT)$.
Letting $i, j \in \Vpm$,
to prove the result we need to show that
$w^\rT(i, j) = w_\lowrT(i, j)$.
Let
\(
  k_{ij}
    \defeq
      \bigl\lfloor w(i, \bari)/2 \bigr\rfloor
        + \bigl\lfloor w(\barj, j)/2 \bigr\rfloor
\).

By Definitions~\ref{def:closure},
\ref{def:strong-closure} and~\ref{def:tight-closure},
it follows that both properties
$w^\rT(i, j) \leq w(i, j)$ and
$w^\rT(i, j) \leq k_{ij}$ hold
so that, by definition of $w_\lowrT$, we have
$w^\rT(i, j) \leq w_\lowrT(i, j)$.
By Lemma~\ref{lem:ijarc-closed->tightclosed}, $w^\rT(i, j) = w(i, j)$
and/or
$w^\rT(i, j) = k_{ij}$.
Therefore since, by definition,
$w_\lowrT(i, j) = \min \bigl\{ w(i, j), k_{ij} \bigr\}$,
we obtain $w_\lowrT(i, j) \leq w^\rT(i, j)$.
\qed
\end{delayedproof}

It follows from the statement of
Theorem~\ref{thm:one-step-tight-closure}
that an implementation based on it
also needs to check the consistency of $G_\rT$.
In principle, one could apply again a
shortest-path closure procedure so as to check whether
$G_\rT$ contains some negative weight cycles.
Fortunately, a much more efficient solution is obtained
by the following result.

\begin{theorem}
\label{thm:even-closure-consistency-condition}
Let $G = (\Vpm, w)$ be a closed integer octagonal graph.
Consider the graphs $G_\rt = (\Vpm, w_\rt)$
and $G_\rT = (\Vpm, w_\lowrT)$
where, for each $i, j \in \Vpm$,
\begin{align}
\label{eq:even-closure-consistency-condition:tightening-rule}
  w_\rt(i, j)
    &\defeq
      \begin{cases}
        2 \lfloor w(i, j)/2 \rfloor,
          &\text{if $j = \bari$;} \\
        w(i, j),
          &\text{otherwise;}
      \end{cases}\\
\label{eq:even-closure-consistency-condition:strong-coherence-rule}
  w_\lowrT(i, j)
    &\defeq
      \min
        \biggl\{
          w(i, j),
          \Bigl\lfloor \frac{w(i, \bari)}{2} \Bigr\rfloor
            + \Bigl\lfloor \frac{w(\barj, j)}{2} \Bigr\rfloor
        \biggr\}.
\end{align}
Suppose that, for all $i \in \Vpm$, $w_\rt(i, \bari) + w_\rt(\bari, i) \geq 0$.
Then $G_\rT$ is an octagonal graph.
\end{theorem}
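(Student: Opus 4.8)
The plan is to verify the two defining properties of an octagonal graph for $G_\rT$: the coherence assumption~\eqref{rule:coherence} and consistency. Coherence is immediate and I would dispose of it first: since $G$ is octagonal, $w(i,j)=w(\bar j,\bar i)$, and because the involution $i\mapsto\bar i$ is its own inverse, substituting into the definition of $w_\lowrT$ gives $w_\lowrT(\bar j,\bar i)=\min\{w(\bar j,\bar i),\lfloor w(\bar j,j)/2\rfloor+\lfloor w(i,\bar i)/2\rfloor\}=w_\lowrT(i,j)$. So only consistency requires real work.

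Call an arc $(b,\bar b)$ a \emph{diagonal} arc; these are exactly the arcs on which $w_\rt$ and $w$ may differ, with $w_\rt(b,\bar b)=2\lfloor w(b,\bar b)/2\rfloor$. The heart of the argument would be the estimate that, for every path $\pi=u\cdots v$ in $G_\rt$,
\[
  w_\rt(\pi)\;\geq\;\min\Bigl\{\,w(u,v),\ \bigl\lfloor w(u,\bar u)/2\bigr\rfloor+\bigl\lfloor w(\bar v,v)/2\bigr\rfloor\,\Bigr\}=w_\lowrT(u,v).
\]
I would prove this by induction on the number of diagonal arcs of $\pi$. In the base case $\pi$ has none, so $w_\rt(\pi)=w(\pi)\geq w(u,v)$ by closedness of $G$ (Lemma~\ref{lem:trans-rule-star}), which already dominates the minimum. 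For the inductive step I would choose a diagonal arc and split $\pi=\pi_0\pathconc(b\,\bar b)\pathconc\pi_1$ with $\pi_0=u\cdots b$ and $\pi_1=\bar b\cdots v$, each having fewer diagonal arcs; applying the induction hypothesis to $\pi_0$ and $\pi_1$ and adding $w_\rt(b,\bar b)=2\lfloor w(b,\bar b)/2\rfloor$ reduces the goal to a four-way case split on which branch each of the two inner minima attains.

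The main obstacle is precisely this case analysis: the floor in $w_\rt(b,\bar b)$ can lose $1$ against $w(b,\bar b)$, and the content of the theorem is that the hypothesis $w_\rt(b,\bar b)+w_\rt(\bar b,b)\geq0$, i.e.\ $\lfloor w(b,\bar b)/2\rfloor+\lfloor w(\bar b,b)/2\rfloor\geq0$, absorbs this loss at each diagonal arc. When both inner minima pick their tightened branch, the two copies of $\lfloor w(\bar b,b)/2\rfloor$ produced by $\pi_0$ and $\pi_1$ combine with $2\lfloor w(b,\bar b)/2\rfloor$ into $2\bigl(\lfloor w(b,\bar b)/2\rfloor+\lfloor w(\bar b,b)/2\rfloor\bigr)\geq0$, leaving exactly $w_\lowrT(u,v)$. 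In the two mixed cases I would use coherence and closedness in the form $w(u,\bar u)\leq 2w(u,b)+w(b,\bar b)$ (obtained from the triangle through $b$ and $\bar b$ plus $w(\bar b,\bar u)=w(u,b)$) and its mirror $w(\bar v,v)\leq 2w(\bar b,v)+w(b,\bar b)$ to bound the floor terms, again reducing the deficit to the same hypothesis at $b$; in the remaining direct/direct case the bound follows from closedness when $w(b,\bar b)$ is even, and when it is odd from comparing against the tightened branch via the two floor bounds just mentioned.

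Granting the estimate, taking $u=v$ yields $w_\rt(\pi)\geq w_\lowrT(u,u)=\min\{0,\lfloor w(u,\bar u)/2\rfloor+\lfloor w(\bar u,u)/2\rfloor\}=0$ for every cycle $\pi$, using $w(u,u)=0$ (closedness) and the hypothesis; hence $G_\rt$ has no negative-weight cycle and is consistent. I would then transfer consistency to $G_\rT$ by symmetrising: any valuation $\rho$ of $G_\rt$ gives $\rho'(i)\defeq(\rho(i)-\rho(\bar i))/2$, which by coherence of $G_\rt$ still satisfies every constraint $\rho'(i)-\rho'(j)\leq w_\rt(i,j)$ and moreover has $\rho'(\bar i)=-\rho'(i)$. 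From $2\rho'(i)\leq w_\rt(i,\bar i)=2\lfloor w(i,\bar i)/2\rfloor$ and $-2\rho'(j)\leq w_\rt(\bar j,j)=2\lfloor w(\bar j,j)/2\rfloor$ I get $\rho'(i)-\rho'(j)\leq\lfloor w(i,\bar i)/2\rfloor+\lfloor w(\bar j,j)/2\rfloor$, and also $\rho'(i)-\rho'(j)\leq w_\rt(i,j)\leq w(i,j)$, so $\rho'$ satisfies $\rho'(i)-\rho'(j)\leq w_\lowrT(i,j)$ for all $i,j$. Thus $G_\rT$ is consistent and, being coherent, is an octagonal graph.
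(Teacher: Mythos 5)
Your proof is correct, but it takes a genuinely different route from the paper's. The paper argues by contradiction: it first shows $\strongclosure(G_\rt) \graphleq G_\rT$, so that inconsistency of $G_\rT$ forces a negative-weight cycle in $G_\rt$; it then invokes, as a black box, Lemma~4 of Lahiri and Musuvathi (Lemma~\ref{lem:LahiriM05-lemma-4}), which says a negative cycle created by the tightening step must come from a zero-weight cycle through some pair $i,\bari$ whose shortest $i$-to-$\bari$ distance is odd; closedness then converts this into $w_\rt(i,\bari)+w_\rt(\bari,i)<0$, contradicting the hypothesis. You instead work forwards and stay self-contained: your induction on the number of tightened (diagonal) arc occurrences gives the quantitative estimate $w_\rt(\pi)\geq w_\lowrT(u,v)$ for every path $\pi = u \cdots v$ in $G_\rt$, from which taking $u=v$ yields non-negativity of all cycles and hence consistency of $G_\rt$; and you transfer consistency to $G_\rT$ by symmetrising a satisfying valuation via $\rho'(i)=(\rho(i)-\rho(\bari))/2$ rather than by comparing with the strong closure. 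I checked your four-way case split and it does go through: the mixed cases need exactly the inequality $w(u,\bar{u})\leq 2w(u,b)+w(b,\bar{b})$ you identify (closedness plus coherence), together with integrality of $w(u,b)$ to commute the floor past it, and the all-tightened and odd direct/direct cases reduce to the hypothesis at $b$. The trade-off: the paper's proof is short but leans on an imported combinatorial lemma proved elsewhere; yours is longer but entirely elementary, and the path-weight bound it establishes is of independent interest, being close in spirit to what Lemma~\ref{lem:closed->1-step-tight-coherence} supplies for Theorem~\ref{thm:one-step-tight-closure}.
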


This result is a corollary of the following result
proved in~\cite[Lemma 4]{LahiriM05}.
\begin{lemma}
\label{lem:LahiriM05-lemma-4}
Let $G =(\Vpm, w)$ be an integer octagonal graph with no negative
weight cycles and
$G_\rt = (\Vpm, w_\rt)$, where $w_\rt$ satisfies
\eqref{eq:even-closure-consistency-condition:tightening-rule},
have a negative weight cycle.
Then there exists $i,\bari \in \Vpm$ and a cycle
$\pi = (i \cdot \pi_1 \cdot \bari) \pathconc (\bari  \cdot\pi_2 \cdot i)$
in $G$ such that $w(\pi) = 0$ and the weight of the shortest path in
$G$ from $i$ to $\bari$ is odd.
\end{lemma}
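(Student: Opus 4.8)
The plan is to take an arbitrary negative cycle $C$ of $G_\rt$ and distil from it a zero-weight cycle of $G$ running through a pair of opposite nodes $i,\bari$ that are joined by an odd shortest path; that node and that cycle will be the required witnesses. First I would record the only effect of the tightening rule~\eqref{eq:even-closure-consistency-condition:tightening-rule}: it lowers the weight of an arc $(i,\bari)$ by exactly $1$ when $w(i,\bari)$ is odd and leaves every other arc untouched. Writing $t$ for the number of such odd arcs $(a_k,\bar{a}_k)$ occurring along $C$ (with multiplicity), this gives $w_\rt(C)=w(C)-t$. Since $G$ has no negative cycles we have $w(C)\ge 0$, and since $C$ is negative in $G_\rt$ we have $w(C)-t<0$; hence $t\ge 1$ and $0\le w(C)\le t-1$.

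Next I would decompose $C$ cyclically as an alternation of its tightened arcs $(a_k,\bar{a}_k)$, $k=1,\dots,t$, with the intermediate subpaths $\beta_k$ from $\bar{a}_k$ to $a_{k+1}$ (indices modulo $t$), each $\beta_k$ containing no tightened arc. The key construction is, for each $k$, the cycle
\[
  D_k \defeq (a_k\,\bar{a}_k) \pathconc \beta_k \pathconc (a_{k+1}\,\bar{a}_{k+1}) \pathconc \bar{\beta}_k ,
\]
where $\bar{\beta}_k$ is the coherent reverse of $\beta_k$, a path from $\bar{a}_{k+1}$ to $a_k$. By the coherence assumption~\eqref{rule:coherence} we have $w(\bar{\beta}_k)=w(\beta_k)$, so $D_k$ is a genuine cycle of $G$ with $w(D_k)=w(a_k,\bar{a}_k)+w(a_{k+1},\bar{a}_{k+1})+2\,w(\beta_k)$. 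Two facts make this work: $w(D_k)\ge 0$ because $G$ has no negative cycles, and $w(D_k)$ is \emph{even} because it is the sum of the two odd weights $w(a_k,\bar{a}_k)$, $w(a_{k+1},\bar{a}_{k+1})$ and the even term $2\,w(\beta_k)$.

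The heart of the argument is then a counting step: summing over $k$ and reindexing, the two endpoint contributions coincide, so $\sum_{k=1}^{t} w(D_k)=2\,w(C)\le 2(t-1)<2t$. As the $t$ summands are nonnegative even integers totalling less than $2t$, at least one of them vanishes, giving a $k^\ast$ with $w(D_{k^\ast})=0$. I expect this averaging-with-parity step to be the main obstacle, since the non-obvious idea is to fold each gap $\beta_k$ against its coherent mirror image so as to obtain an even-weight cycle, and to observe that these even cycles average to exactly $2\,w(C)$. (When $t=1$ the bound already forces $w(C)=0$, and then $C$ itself serves as $D_{k^\ast}$.)

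Finally I would read off the witnesses from $\pi\defeq D_{k^\ast}$. Put $i\defeq a_{k^\ast}$, so that $\pi=(i\,\bari)\pathconc\rho$ with $\rho$ a path from $\bari$ to $i$ and $w(\pi)=0$; this is the required zero-weight cycle through $i$ and $\bari$, with first leg reduced to the single arc. It remains to check that the shortest-path weight from $i$ to $\bari$ in $G$ is odd. Writing $d(\cdot,\cdot)$ for shortest-path weights in $G$, the relation $w(\rho)=-w(i,\bari)$ gives $d(\bari,i)\le -w(i,\bari)$; moreover $d(i,\bari)+d(\bari,i)\ge 0$ (no negative cycles) and $d(i,\bari)\le w(i,\bari)$ (the single arc). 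Chaining these, $w(i,\bari)\ge d(i,\bari)\ge -d(\bari,i)\ge w(i,\bari)$, so $d(i,\bari)=w(i,\bari)$; and since $(i,\bari)$ is a tightened arc, $w(i,\bari)$ is odd, which is exactly what is wanted.
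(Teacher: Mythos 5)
Your argument is correct, but there is nothing in the paper to compare it against: the paper does not prove this lemma, it imports it wholesale as Lemma~4 of Lahiri and Musuvathi~\cite{LahiriM05} and only uses it to derive Theorem~\ref{thm:even-closure-consistency-condition}. So what you have supplied is a self-contained proof of a step the paper delegates to a citation. Checking it on its own merits: the accounting $w_\rt(C)=w(C)-t$ and the bounds $t\geq 1$, $0\leq w(C)\leq t-1$ are right (note that the arcs of $G_\rt$ coincide with those of $G$, so $C$ really is a cycle of $G$ with integral weight); the folded cycles $D_k$ are genuine cycles of $G$ because coherence~\eqref{rule:coherence} makes $\bar{\beta}_k$ a path of $G$ with $w(\bar{\beta}_k)=w(\beta_k)$; each $w(D_k)$ is a nonnegative even integer (odd $+$ odd $+$ even, and $G$ has no negative cycles); and the cyclic reindexing giving $\sum_{k}w(D_k)=2w(C)<2t$ does force some $w(D_{k^\ast})=0$, since $t$ nonnegative even integers summing to less than $2t$ cannot all be at least $2$. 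The endgame is also sound: from $w(i,\bari)+w(\rho)=0$, $d(\bari,i)\leq w(\rho)$, $d(i,\bari)\leq w(i,\bari)$ and $d(i,\bari)+d(\bari,i)\geq 0$ you get $d(i,\bari)=w(i,\bari)$, which is odd because $(i,\bari)$ is a tightened arc; this is exactly the property the paper extracts from the lemma in the proof of Theorem~\ref{thm:even-closure-consistency-condition}. In a polished write-up you should make explicit that $C$ may be rotated so as to begin at a tightened arc (legitimate since $t\geq 1$), that some $\beta_k$ may be empty without harming the construction, and that the inequality $d(i,\bari)+d(\bari,i)\geq 0$ rests on decomposing a closed walk into cycles of nonnegative weight; none of these is a real gap.
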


\begin{delayedproof}[of Theorem~\ref{thm:even-closure-consistency-condition}]
The proof is by contradiction; suppose $G_\rT$ is not an octagonal
graph; then by Definitions~\ref{def:closure}, \ref{def:strong-closure}
and~\ref{def:tight-closure}, $G_\rT$ is inconsistent.
We show that $G_\rt$ is also inconsistent.
Again, we assume to the contrary that $G_\rt$ is consistent
and derive a contradiction.
Let $i,j \in \Vpm$.
By~\eqref{eq:even-closure-consistency-condition:tightening-rule},
we have $w_\rt(i,j) \leq w(i,j)$
and $w_\rt(i, \bari)/2 + w_\rt(\barj, j)/2 = k_{ij}$,
where
\(
  k_{ij}
    \defeq
      \bigl\lfloor w(i, \bari)/2 \bigr\rfloor
        + \bigl\lfloor w(\barj, j)/2 \bigr\rfloor
\).
Letting $\strongclosure(G_\rt) = (\Vpm, w^\rS_\rt)$,
we have, by Definition~\ref{def:strong-closure},
$w^\rS_\rt(i,j) \leq w_\rt(i, j)$
and
\(
  w^\rS_\rt(i,j)
    \leq
      w_\rt(i, \bari)/2 + w_\rt(\barj, j)/2.
\)
Thus
\(
  w^\rS_\rt(i,j) \leq \min\bigl(w(i,j), k_{ij}\bigr).
\)
As this holds for all $i,j \in \Vpm$,
by~\eqref{eq:even-closure-consistency-condition:strong-coherence-rule},
$\strongclosure(G_\rt) \graphleq G_\rT$,
contradicting the assumption that $G_\rt$ was consistent.
Hence $G_\rt$ is inconsistent and
therefore contains a negative weight cycle.

By Lemma \ref{lem:LahiriM05-lemma-4},
there exists $i,\bari \in \Vpm$ and a cycle
$\pi = (i \cdot \pi_1 \cdot \bari) \pathconc (\bari  \cdot\pi_2 \cdot i)$
in $G$ such that $w(\pi) = 0$ and the weight of the shortest path in
$G$ from $i$ to $\bari$ is odd.
As $G$ is closed, $w(i, \bari) \leq w(i \cdot \pi_1 \cdot \bari)$
and $w(\bari, i) \leq w(\bari \cdot \pi_2 \cdot i)$.
Thus $w(i,\bari) + w(\bari, i) \leq w(\pi) = 0$.
Moreover, $(i \bari)$ is a path
and hence the shortest path from $i$ to $\bari$
so that $w(i \bari)$ is odd; hence,
by~\eqref{eq:even-closure-consistency-condition:tightening-rule},
$w(i, \bari) = w_\rt(i, \bari) + 1$ and $w(\bari, i) \geq w_\rt(\bari, i)$.
Therefore $w_\rt(i,\bari) + w_\rt(\bari, i) < 0$.
\qed
\end{delayedproof}

\begin{figure}
\begin{align*}
& \kw{function} \;
    \texttt{tight\_closure}
      (\kw{var} \; w\arrayrange{0}{2n-1}\arrayrange{0}{2n-1}) : \kw{bool} \\
& \qquad
  \{ \text{ Initialization: $\bigO(n)$ } \} \\
& \qquad
    \kw{for} \; i := 0 \; \kw{to} \; 2n-1 \; \kw{do} \; w[i,i] := 0; \\
& \qquad
  \{ \text{ Classical Floyd-Warshall: $\bigO(n^3)$ } \} \\
& \qquad
    \kw{for} \; k := 0 \; \kw{to} \; 2n-1 \; \kw{do} \\
& \qquad\qquad
    \kw{for} \; i := 0 \; \kw{to} \; 2n-1 \; \kw{do} \\
& \qquad\qquad\qquad
      \kw{for} \; j := 0 \; \kw{to} \; 2n-1 \; \kw{do} \\
& \qquad\qquad\qquad\qquad
        w[i,j] := \min\bigl(
                        w[i,j],
                        w[i,k] + w[k,j]
                      \bigr); \\
& \qquad
  \{ \text{ Check for $\Qset$-consistency: $\bigO(n)$ } \} \\
& \qquad
    \kw{for} \; i := 0 \; \kw{to} \; 2n-2 \; \kw{step} \; 2 \; \kw{do} \\
& \qquad\qquad
      \kw{if} \; w[i,i] < 0 \; \kw{return} \; \mathrm{false}; \\
& \qquad
  \{ \text{ Tightening: $\bigO(n)$ } \} \\
& \qquad
  \kw{for} \; i := 0 \; \kw{to} \; 2n-1 \; \kw{do} \\
& \qquad\qquad
    w[i,\bari] := \mathrm{floor}\bigl( w[i,\bari] / 2 \bigr); \\
& \qquad
  \{ \text{ Check for $\Zset$-consistency: $\bigO(n)$ } \} \\
& \qquad
    \kw{for} \; i := 0 \; \kw{to} \; 2n-2 \; \kw{step} \; 2 \; \kw{do} \\
& \qquad\qquad
      \kw{if} \; w[i,\bari] + w[\bari,i] < 0
        \; \kw{return} \; \mathrm{false}; \\
& \qquad
  \{ \text{ Strong coherence: $\bigO(n^2)$ } \} \\
& \qquad
  \kw{for} \; i := 0 \; \kw{to} \; 2n-1 \; \kw{do} \\
& \qquad\qquad
    \kw{for} \; j := 0 \; \kw{to} \; 2n-1 \; \kw{do} \\
& \qquad\qquad\qquad
      w[i,j] := \min\bigl(
                      w[i,j],
                      w[i,\bari]/2 + w[\barj,j]/2
                    \bigr); \\
& \qquad
  \kw{return} \; \mathrm{true};
\end{align*}
\caption{A $\bigO(n^3)$ tight closure algorithm for integer coherent graphs}
\label{fig:tight-closure}
\end{figure}
The combination of the results stated in
Theorems~\ref{thm:one-step-tight-closure}
and~\ref{thm:even-closure-consistency-condition}
(together with the well known result for rational consistency)
leads to an $\bigO(n^3)$ tight closure algorithm,
such as that given by the pseudo-code in Figure~\ref{fig:tight-closure},
that
computes the tight closure of any (possibly inconsistent)
coherent integer-weighted graph
returning the Boolean value `$\mathrm{true}$'
if and only if the input graph is $\Zset$-consistent.

\section{Conclusion and Future Work}
\label{sec:conclusion}

We have presented and fully justified an $\bigO(n^3)$ algorithm
that computes the tight closure of a set of integer octagonal
constraints.
The algorithm ---which is based on the extension to integer-weighted
octagonal graphs of the one we proposed for rational-weighted
octagonal graphs~\cite{BagnaraHMZ05,BagnaraHMZ05TR}--- and its
proof of correctness means the issue about
the possibility of computing the tight closure at a computational
cost that is asymptotically not worse than the cost of computing
all-pairs shortest paths is finally closed.

In the field of hardware and software verification, the integrality
constraint that distinguishes integer-weighted from rational-weighted
octagonal graphs can be seen as an abstraction of the more general
imposition of a set of congruence relations.  Such a set can be encoded
by an element of a suitable abstract domain such as the non-relational
congruence domain of \cite{Granger89}
(that is, of the form $x = a \pmod b$),
the  weakly relational \emph{zone-congruence} domain of \cite{Mine02}
(that is, also allowing the form $x - y = a \pmod b$),
the linear congruence domain of \cite{Granger91},
and the more general fully relational \emph{rational grids} domain developed
in~\cite{BagnaraDHMZ07}.
The combination of such domains with the abstract domain
proposed in \cite{BagnaraHMZ05,BagnaraHMZ05TR}
is likely to provide an interesting complexity-precision trade-off.
Future work includes investigating such a combination,
exploiting the ideas presented in this paper.

%\bibliographystyle{plain}
%\bibliography{ppl,mybib}
\newcommand{\noopsort}[1]{}\hyphenation{ Ba-gna-ra Bie-li-ko-va Bruy-noo-ghe
  Common-Loops DeMich-iel Dober-kat Di-par-ti-men-to Er-vier Fa-la-schi
  Fell-eisen Gam-ma Gem-Stone Glan-ville Gold-in Goos-sens Graph-Trace
  Grim-shaw Her-men-e-gil-do Hoeks-ma Hor-o-witz Kam-i-ko Kenn-e-dy Kess-ler
  Lisp-edit Lu-ba-chev-sky Ma-te-ma-ti-ca Nich-o-las Obern-dorf Ohsen-doth
  Par-log Para-sight Pega-Sys Pren-tice Pu-ru-sho-tha-man Ra-guid-eau Rich-ard
  Roe-ver Ros-en-krantz Ru-dolph SIG-OA SIG-PLAN SIG-SOFT SMALL-TALK Schee-vel
  Schlotz-hauer Schwartz-bach Sieg-fried Small-talk Spring-er Stroh-meier
  Thing-Lab Zhong-xiu Zac-ca-gni-ni Zaf-fa-nel-la Zo-lo }

\end{document}